\documentclass[sip,biber]{now-journal} %

\usepackage{amsmath,amssymb,amsfonts}
\usepackage{graphicx,color}
\usepackage{textcomp}
\usepackage{xcolor}
\usepackage{algorithm2e}

\usepackage{array}
\usepackage[caption=false,font=footnotesize,labelfont=sf,textfont=sf]{subfig}
\usepackage{bm}
\usepackage{bbm}
\usepackage{amsthm}
\usepackage{algpseudocode}
\usepackage{multirow}

\RestyleAlgo{ruled}
\graphicspath{{figs/}}

\newcommand{\dv}{\mathbf{d}}

\newcommand{\pv}{\mathbf{p}}

\newcommand{\uv}{\mathbf{u}}
\newcommand{\wv}{\mathbf{w}}

\newcommand{\xv}{\mathbf{x}}
\newcommand{\yv}{\mathbf{y}}

\newcommand{\Am}{\mathbf{A}}
\newcommand{\Bm}{\mathbf{B}}

\newcommand{\Dm}{\mathbf{D}}

\newcommand{\Hm}{\mathbf{H}}
\newcommand{\Id}{\mathbf{I}}

\newcommand{\Km}{\mathbf{K}}
\newcommand{\Lm}{\mathbf{L}}

\newcommand{\Qm}{\mathbf{Q}}

\newcommand{\Sm}{\mathbf{S}}

\newcommand{\Um}{\mathbf{U}}
\newcommand{\Wm}{\mathbf{W}}
\newcommand{\Vm}{\mathbf{V}}
\newcommand{\Xm}{\mathbf{X}}

\newcommand{\Ec}{\mathcal{E}}

\newcommand{\Gc}{\mathcal{G}}

\newcommand{\Nc}{\mathcal{N}}
\newcommand{\Oc}{\mathcal{O}}

\newcommand{\Vc}{\mathcal{V}}

\newcommand{\piv}{\hbox{\boldmath$\pi$}}

\newcommand{\Lambdam}{\hbox{\boldmath$\Lambda$}}

\title{Lossy compression of weighted graph adjacency matrices by transform coding}

\author[1*]{Yanagiya,Kenta}
\author[1]{Hara,Junya}
\author[2]{Higashi,Hiroshi}
\author[1]{Tanaka,Yuichi}
\author[3]{Ortega,Antonio}

\affil[1]{Graduate School of Engineering, The University of Osaka, Osaka, Japan}
\affil[2]{Faculty of Engineering Science, Kansai University, Osaka, Japan}
\affil[3]{Department of Electrical and Computer Engineering, University of Southern California, CA, USA}

\issuevolumeyear{2026}
\issuevolumenumber{xx}
\issueelocation{e} %
\articletype{Original Paper} %

\articledatabox{Received xx xxxxx 2024; Revised xx xxxxx 2024\\[2pt]
ISSN 2048-7703; DOI 10.1561/116.xxxxxxxx\\
\copyright\ 2026 K. Yanagiya, J. Hara, H. Higashi, Y. Tanaka and A. Ortega}

\OAstatement{This is an Open Access article, distributed under the terms of the Creative Commons Attribution licence (\url{http://creativecommons.org/licenses/by-nc/4.0/}), which permits unrestricted re-use, distribution, and reproduction in any medium, for non-commercial use, provided the original work is properly cited.}

\keywords{Graph signal processing, graph filter bank, line graph, edge smoothness}

\creditline*{Corresponding author: Kenta Yanagiya, k\_yanagiya@msp-lab.org. 
The preliminary version of this paper is presented in \citep{yanagiya2024a}.
This work was supported in part by JSPS KAKENHI under Grant 23H01415, JST SPRING under Grant JPMJSP2138, and JST AdCORP under Grant JPMJKB2307.}

\begin{document}

\begin{abstract}
In this paper, we propose a compression framework for weighted graphs in which the graph topology is transmitted losslessly and edge weights are compressed lossily.
A challenge in the lossy compression of edge weights is that the underlying relationships between edges are ambiguous.
To address this issue, we first transform the unweighted graph into the corresponding \textit{line graph}, whose nodes represent the edges of the original graph and whose edges encode the relationships between them.
The line graph transform allows us to regard edge weights as a graph signal defined on the line graph.
Instead of transmitting the edge-weight vector, we first transform it with a graph filter bank on the line graph.
Then, quantization and entropy coding are performed on the transformed coefficients of the edge weight vector.
In addition to the lossy compression method, we formalize edge smoothness on the line graph and show that it serves as a measure of the difficulty of compression. 
The proposed smoothness measure can be easily calculated without converting to a line graph.
This provides insight into the expected compression performance of a given weighted graph.
Experiments on synthetic and real-world data validate the effectiveness of the proposed method by comparing it with existing matrix preprocessing methods.
\end{abstract}

\section{Introduction}\label{sec: intro}
Graphs are powerful mathematical representations of the connectivity among nodes and are widely used to analyze various networks, including social, brain, sensor, and transportation networks \citep{ortega2018, shuman2013, tanaka2018, tanaka2020, yamada2020, sakiyama2019a, onuki2016}.
While graphs are useful, they often have many nodes and edges. 
This demands significant computational and storage resources for storing and/or transmitting them \citep{besta2019}.
Therefore, effective compression methods for weighted graphs are crucial.

Several lossless graph compression schemes have been proposed for unweighted graphs, including conversions to bitmap, $k^2$ trees, and succinct data structures \citep{besta2019, alvarez2010, khandelwal2017, martinez-bazan2012}. These methods are generally efficient (although they may be reaching their theoretical performance limits \citep{besta2018}) and desirable (since topological information can be fully preserved). 
However, efficient lossless compression of weighted graphs is challenging because weights are typically real-valued, although it can be achieved if the graphs possess special properties, such as sparsity \citep{besta2019, toivonen2011}.
To address this challenge, this paper proposes focusing on \textit{lossy graph compression for weighted graphs}, in which the topology (corresponding to the unweighted graph) is encoded losslessly.  

Many lossy graph compression methods have been proposed, including node and edge sampling, graph sparsification, and graph coarsening \citep{sakiyama2019a, yanagiya2022, spielman2011a, jin2020}, which are typically based on \textit{graph simplification}. 
In weighted graphs, edges convey information about (1) the existence of interconnections between nodes (edge on/off) and (2)  the strength of edges (edge weights). 
Most lossy compression approaches for weighted graphs consider both edge existence and edge weights simultaneously during simplification \citep{willcock2006, Zhou2012, toivonen2011, liu2012}, which can reduce the number of edges and, in some cases, nodes. 
The performance of these methods is often quantified in terms of errors in edge weights and degree distribution.
However, these methods modify the original graph topology, which is generally undesirable because it can significantly affect graph-based applications and graph visualization.
Additionally, these changes alter the graph spectrum and can therefore significantly impact various graph signal processing (GSP) methods \citep{cheung2018, hara2023, hara2023a}.
As an alternative, we propose that compression should consider not only edge-weight errors but also their impact on downstream tasks that use the graph spectrum, such as graph signal diffusion and spectral clustering \citep{tanaka2020, onuki2016, vonluxburg2007}.

Our work is motivated by the assumption that topological information plays a more significant role in many applications than edge weights, leading to the idea of compressing them separately \citep{caldarelli2012}.
This is different from many existing graph compression or simplification methods, which operate directly on the topology \citep{besta2019, jin2020}.
 
Note that many works have proposed \textit{graph signal} transforms \citep{narang2013, sakiyama2014, sakiyama2019, pavez2023}, which are designed to sparsely represent signals on a graph.
However, they are not intended for the \textit{graph} itself (i.e., its adjacency matrix).
Therefore, there is no canonical way to apply existing graph signal transforms directly to adjacency matrices so as to obtain sparse coefficient representations.
Edge weights can be represented as a vector, but there is no unique way to do so. 
Different node labelings induce different permutations of the edge-weight vector, and standard transforms generally produce different coefficients for different permutations. 
Therefore, one challenge in compressing edge weights is organizing them to preserve inter-edge relationships.

In this paper, we address the challenges outlined above in graph compression.
First, to handle permutation invariance, we propose a two-step compression method: the binary topology information is compressed losslessly, while the edge weights are compressed lossily.
Since both the encoder and decoder have access to the losslessly encoded graph topology, edge-weight compression can be based on it.
Specifically, we convert the original adjacency information into a \textit{line graph}, whose nodes and edges represent the edges of the original graph and their relationships, respectively \citep{aigner1967, evans2010}.
With this conversion, \textit{edge weights are regarded as a graph signal on the line graph}.
We can then use various graph signal transforms that explicitly account for edge positions to obtain a sparse representation of the edge weights.

Second, we demonstrate that many real-world edge weights are smooth on their corresponding line graphs, indicating that the proposed compression framework is well-suited to practical applications.
Furthermore, we show that this edge smoothness can be computed efficiently without explicitly constructing the line graph: it only requires node-wise statistics on the original graph.
This efficient computation allows us to use edge smoothness as a measure of compression difficulty, providing insight into the expected compression performance for a given weighted graph.

We perform graph compression experiments using synthetic and real-world weighted graphs.
For synthetic graphs, we demonstrate that the proposed compression method achieves better reconstruction accuracy than alternative methods across various levels of edge smoothness.
We also demonstrate that the proposed method outperforms alternatives on real-world graphs.

In this paper, we provide several substantial additions beyond our conference paper \citep{yanagiya2024a}:
\begin{enumerate}
    \item We formalize \emph{edge smoothness} on the line graph and propose a local--global decomposition that enables computing the variation in $\Oc(N^2)$ without explicitly forming the line graph (see Section~\ref{sec: edge_smoothness_sub} and Section~\ref{sec: edge_smoothness_original_graph});
    \item We demonstrate that edge smoothness serves as an effective measure of \emph{compression difficulty}, providing a theoretical basis for predicting compression performance (see Section~\ref{sec: compressibilty});
    \item We expand the experiments to three synthetic graph families, $6$ real-world traffic networks, and $3$ power-grid with non-geometric edge weights, including ablations over quantization steps (see Section~\ref{sec: experiments}).
\end{enumerate}

The remainder of this paper is organized as follows.
Section \ref{sec: preliminary} introduces the notation, the line graph conversion, and graph operators for inter-edge relations.
Section \ref{sec: lossy_compression} presents the proposed compression framework and the transform-coding pipeline for edge weights.
Section \ref{sec: edge_smoothness} formalizes the edge-domain smoothness assumption, provides its interpretations on both the original graph and the line graph, and demonstrates that edge smoothness serves as a measure of compression difficulty. Section \ref{sec: experiments} presents numerical experiments on synthetic and real data.
Finally, Section \ref{sec: conclusion} concludes the paper.

\section{Preliminary}\label{sec: preliminary}
\subsection{Notation}
In this paper, we consider that the weighted undirected graph is represented as $\Gc = (\Vc, \Ec, \Wm)$, where $\Vc$ and $\Ec$ are the sets of nodes and edges, respectively, and $\Wm$ is the weighted adjacency matrix, where the $(i,j)$-entry $w_{ij}$ represents the edge weights between node $i$ and node $j$.
$N = |\Vc|$ is the number of nodes.
All-one vectors with $N$ and $|\Ec|$ elements are denoted by $\mathbf{1}_N$ and $\mathbf{1}_{|\Ec|}$, respectively, the identity matrices with $N$ and $|\Ec|$ diagonal elements are $\Id_N$ and $\Id_{|\Ec|}$.
We define two types of degree matrices: the unweighted degree matrix $[\Dm]_i = [\Am\mathbf{1}_N]_i$, which counts the number of edges incident to each node, and the weighted degree matrix $[\Km]_i = [\Wm\mathbf{1}_N]_i$, which sums the edge weights incident to each node.
The combinatorial graph Laplacian matrix $\Lm = \Km - \Wm$, a real symmetric matrix, has non-negative real eigenvalues $\lambda$ and corresponding orthonormal eigenvectors $\uv_i$ so that we can write $\Lm = \Um\bm{\Lambda}\Um^\top$, where $\Um$ is a orthonormal matrix and $\bm{\Lambda} = \text{diag}(\lambda_0, \cdots, \lambda_{N-1})$.
Without loss of generality, these eigenvalues are assumed to be ordered as $0=\lambda_{0}<\lambda_{1} \leq \lambda_{2} \leq \cdots \leq \lambda_{N-1}=\lambda_{\max}$.
For line graphs, we use the undirected incidence matrix:
\begin{equation}\label{eqn: undirected_incidence_matrix}
    [\tilde{\Bm}]_{i\alpha} = \begin{cases}
    1 & \text{Edge } \alpha=(i,j) \textrm{ is incident to node }\; i \; \textrm{or} \; j,\\
    0 & \text{Otherwise}.
\end{cases}
\end{equation}

\subsection{Graph Signal Smoothness}
A graph signal $x: \Vc \rightarrow \mathbb{R}$ is a function that assigns a real value to each node.
Graph signals can be written as vectors $\xv \in \mathbb{R}^N$ whose $i$th element, $[\xv]_i$, represents the signal value at the $i$th node. The graph Fourier transform (GFT) of a graph signal is defined by $\hat{\xv} = \Um^\top \xv$.
In GSP, \textit{smoothness of graph signals} refers to a small variation of signal values between adjacent nodes over the entire graph \citep{tanaka2020, chen2015}.
The graph signal variation $\Delta_{\Lm}(\xv)$ of a graph signal $\xv$ on $\Gc$ is calculated as follows:
\begin{equation}\label{eqn: signal_smoothness}
    \Delta_{\Lm}(\xv) = \xv^\top\Lm\xv = \sum_{(i,j)\in \Ec} [\Wm]_{ij}([\xv]_i - [\xv]_j)^2.
\end{equation}
Smooth graph signals have energy concentrated in the coefficients $\hat{\xv}$ corresponding to the low-frequency band and are known to be sparse in the graph frequency domain.  
Therefore, the graph signal variation \eqref{eqn: signal_smoothness} can be viewed as a measurement of compressibility for graph signals.

\begin{figure}[tp]
    \centering
    \includegraphics[width=\linewidth]{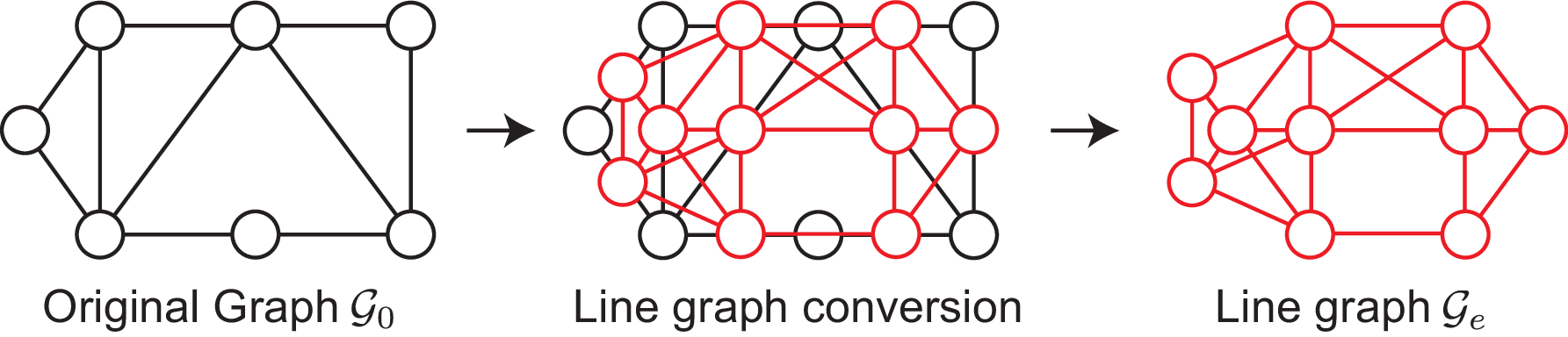}
    \caption{Illustration of line graph conversion. The red circles correspond to edges of the original graph. An edge of a line graph is drawn if the two edges are connected via a node in the original graph.}
    \label{fig: line_graph_conversion}
\end{figure}

\subsection{Line Graph Conversion and Graph Operator for Inter-edge Relationships}\label{sec: line graph}
In this paper, we only consider unweighted line graphs. 
\begin{definition}[Line graph]
    \label{def: Line Graph}
    For a given graph $\Gc = (\Vc, \Ec, \Wm)$, the unweighted adjacency matrix of the corresponding line graph $\Gc_e =(\Vc_e, \Ec_e, \Am_e)$ is defined as follows \citep{evans2010}:
    \begin{equation}
    \Am_e = \tilde{\Bm}^\top\tilde{\Bm} - 2\Id_{|\Ec|}.
    \label{eqn: adj_line}
    \end{equation}
\end{definition}
\noindent
The line graph conversion process is illustrated in Fig.~\ref{fig: line_graph_conversion}. Note that in the line graph $\Gc_e$ we have $|\Vc_e| = |\Ec|$.
Since $\Gc_e$ is unweighted by definition, its corresponding graph Laplacian is 
\begin{equation}\label{eqn: lap_line}
    \Lm_e = \Dm_e - \Am_e,
\end{equation}
where $\Dm_e$ is the unweighted degree matrix of $\Gc_e$.
The graph Laplacian of a line graph explicitly captures the interaction between the nodes of the line graph, that is, between the edges of the original graph.
Similar to $\Lm$, we can define the eigenvalue decomposition of $\Lm_e$ as follows:
\begin{equation}\label{eqn: eigenvalue_decomposition_line}
    \Lm_e = \Vm \Lambdam_e \Vm^\top,
\end{equation}
where $\Lambdam_e$ and $\Vm$ are eigenvalue matrix of $\Lm_e$ and its corresponding eigenvector matrix, respectively.

\section{Lossy Compression Framework of Weighted Adjacency Matrices}\label{sec: lossy_compression}
In this section, we propose a lossy compression framework for weighted graphs.
First, we provide an overview of the framework, and then we discuss the computational cost of its building blocks.

\subsection{Framework}
The overview of the proposed weighted adjacency matrix compression is shown in Fig.~\ref{fig: Overview_Compression}.
We transmit the graph's binary adjacency information losslessly and compress the edge weights using a graph filter bank.
The proposed method has two main advantages: 1) the topological information is completely preserved, and 2) edge weight compression can be performed with bitrate scalability.

\begin{figure}[tp]
    \centering
    \includegraphics[width = \linewidth]{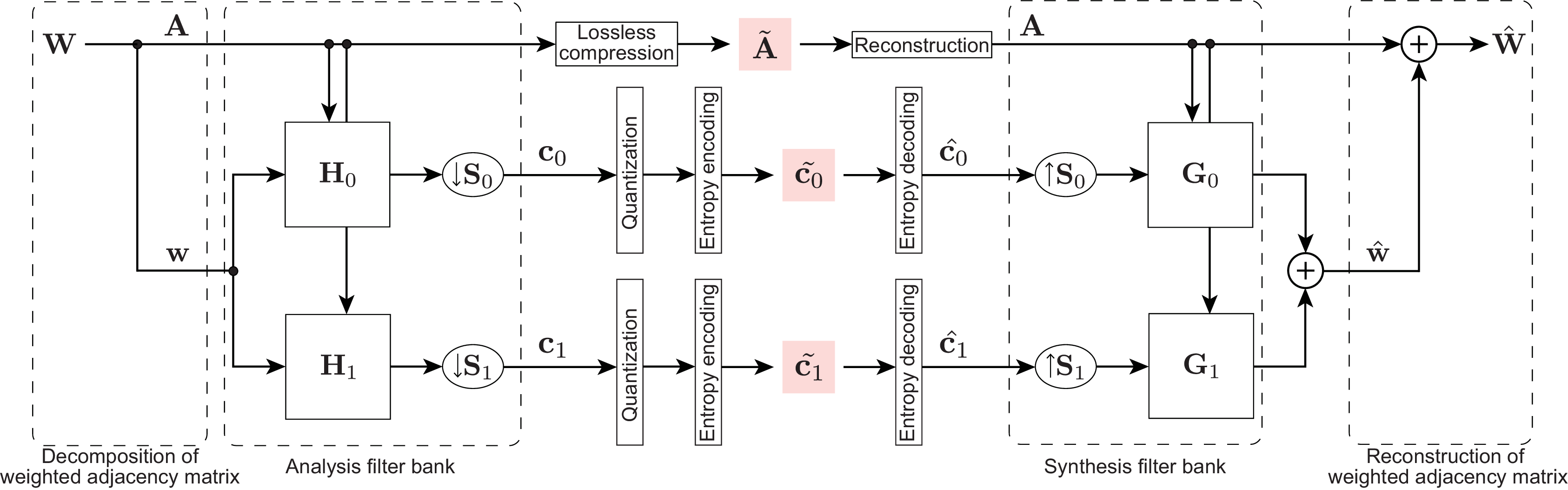}
    \caption{Overall framework of the proposed weighted adjacency matrix compression method. The compressed matrix and vector are highlighted in red.}
    \label{fig: Overview_Compression}
\end{figure}

More formally, a given weighted adjacency matrix $\Wm$ is decomposed into a binary adjacency matrix $\Am \in \{0,1\}^{N\times N}$ and an edge weight vector $\wv \in \mathbb{R}_{\ge 0}^{|\Ec|}$, where $\Am$ is given by:
\begin{equation}
    [\Am]_{ij} = \begin{cases}
        1 & [\Wm]_{ij} \neq 0,\\
        0 & \text{otherwise.}
    \end{cases}
\end{equation}

We can use one of the existing compression methods to compress $\Am$ losslessly \citep{besta2019, alvarez2010, khandelwal2017}.
Our main compression target is $\wv$, defined on the edges of the original graph, which we now treat as a graph signal on the line graph $\Gc_e$, which, as noted above, has $|\Vc_e| = |\Ec|$ nodes. 
In this paper, $\wv$ is transformed by a critically-sampled graph filter bank \citep{narang2013, sakiyama2019, pavez2023, narang2012} on $\Gc_e$.
After the analysis transform, the subband coefficients $\yv_c$ are obtained as follows:
\begin{equation}\label{eqn: subband_coeff}
    \yv_c = \Sm_c^\top\Hm_c\wv
\end{equation}
where $\Hm_c \in \mathbb{R}^{|\Ec|\times|\Ec|}$ is the $c$th analysis filter in the graph filter bank ($c = 0, 1, \ldots, M-1$) and $\Sm_c^\top \in \mathbb{R}^{|\Ec_c|\times|\Ec|}$ is the downsampling operator on the line graph.
Here, $|\Ec_c|$ denotes the number of retained coefficients in the $c$th subband.
The subband coefficients $\{\yv_c\}$ are then quantized and entropy coded.
Since each subband coefficient sequence $\{\yv_c\}_{0, 1, \ldots, M-1}$ is a one-dimensional vector in $\mathbb{R}^{|\Ec_c|}$, standard scalar quantization and entropy coding can be applied to each subband independently.
Note that our proposed framework is general: We can use any graph filter banks and sampling methods, including methods presented in \citep{narang2013, sakiyama2019, pavez2023, narang2012, sakiyama2016}.

At the decoder, the decoded edge weights are assigned to the edges specified by the losslessly decoded binary adjacency matrix.
Let $\hat{\wv}$ be the decoded edge-weight vector and let edge $\alpha$ correspond to $(i,j) \in \Ec$. 
The reconstructed weighted adjacency matrix is defined by
\begin{equation}\label{eqn: reconstruction_rule}
    [\hat{\Wm}]_{ij} = 
    \begin{cases}
        [\hat{\wv}]_{\alpha} & [\Am]_{ij} = 1,\\
        0 & [\Am]_{ij} = 0.
    \end{cases}
\end{equation}

In the proposed framework, the binary adjacency matrix is transmitted losslessly and remains available at the decoder. 
Note that if a particular edge is reconstructed with a weight of zero due to quantization, the corresponding edge is removed.

\subsection{Note on Computational Costs}
Since the number of edges in the original graph is generally larger than the number of nodes, i.e., $|\Ec| \geq N$, we must ensure that both filtering and sampling steps remain computationally tractable.
Some graph filter banks require an eigenvalue decomposition whose computational cost is $\Oc(|\Ec|^3)$ for the line graph \citep{shuman2013, pavez2023}.
With compact support filter banks, defined by a set of polynomial filters, the filtering step does not require eigendecomposition, reducing the computational cost to $\Oc(p|\Ec|^2)$, where $p$ is the order of the polynomial \citep{hammond2009}.

To reduce redundancy, the transform coefficients are subsampled as in \eqref{eqn: subband_coeff}.
Critical sampling in the node domain requires a computational complexity of $\Oc(|\Ec|)$.
Alternatively, sampling in the spectral domain \citep{sakiyama2019} also requires a $\Oc(|\Ec|)$, but requires eigendecomposition to convert the signal into the spectral domain, thus having overall $\Oc(|\Ec|^3)$ complexity.
In our experiments, we use several graph filter banks representative of various design techniques (Section \ref{sec: experiments}).
Two-channel graph filter banks with node-domain sampling that do not require eigenvalue decomposition are generally preferred for the proposed framework due to their lower complexity. 

\section{Edge Weight Smoothness and Compressibility of Edge Weights}\label{sec: edge_smoothness}
In this section, edge weight smoothness is studied within our framework (see Section \ref{sec: lossy_compression}).
We also demonstrate that many real-world weighted graphs have smooth edge weights and show that edge-weight smoothness can be used to quantify edge-weight compressibility.

\subsection{Edge Weight Smoothness}\label{sec: edge_smoothness_sub}
Suppose that the original graph $\Gc = (\Vc, \Ec, \Wm)$ and its corresponding line graph $\Gc_e = (\Vc_e, \Ec_e, \Am_e)$ are given, and consider the smoothness of edge weights for a given topology.
Since the edge weights can be viewed as a graph signal on the line graph $\Gc_e$, the edge weight variation can be naturally defined by applying the definition of \eqref{eqn: edge_variation} to the line graph.
\begin{definition}[Global Edge Weight Variation]\label{def: edge_weight_smoothness}
    The variation of the edge weight signal $\wv$ on the line graph $\Gc_e$ is given by
    \begin{equation}\label{eqn: edge_variation}
        \Delta_{\Lm_e}(\wv) = \wv^\top\Lm_e\wv = \sum_{(\alpha, \beta) \in \Ec_e} ([\wv]_\alpha - [\wv]_\beta)^2,
    \end{equation}
    where $\alpha$ and $\beta$ are edge indices of the original graph (and therefore node indices in $\Gc_e$), and $(\alpha, \beta) \in \Ec_e$ represents a pair of adjacent edges in the line graph. 
\end{definition}
\noindent
Note that if the original graph is unweighted, we have $\wv = \mathbf{1}_{|\Ec|}$, and therefore $\Delta_{\Lm_e}(\wv) = 0$, indicating that the edge signal is indeed smooth (since all edge weights are equal to one).
If $\Gc$ is weighted, $\wv$ will not be as smooth and $\Delta_{\Lm_e}(\wv)$ can be large. As we will see, this lack of smoothness means that more bits are required to compress $\wv$ while achieving a given degree of accuracy. 

\subsection{Experimental Validation of Edge Weight Smoothness}\label{sec: toy_example}
We now use the normalized edge weight variation:
\begin{equation}
\label{eq:normalized-variation}
    \underline{\Delta}_{\Lm_e}(\wv) = \Delta_{\Lm_e}(\wv)/\|\wv\|^2, 
\end{equation}
to evaluate the smoothness of edge weights in three networks:
\begin{enumerate}
    \item Synthetic random sensor graph ($N=100$ and $|\Ec| = 360.18$ on average), where a graph is constructed by placing $100$ nodes uniformly at random in space and connecting each node to its $6$ nearest neighbors.
    The edge weight $[\Wm]_{ij}$ is computed as 
    \begin{equation}\label{eqn: distance}
        [\Wm]_{ij} = \mathrm{exp}\left(-\frac{\|\pv_i-\pv_j\|_2}{0.3}\right),
    \end{equation}
    where $\pv_i \in \mathbb{R}^{2}$ is a 2-D coordinate of nodes.
    
    \item Traffic network ($N=322.27$ and $|\Ec| = 445.37$ on average): Graphs are obtained from the GSP-traffic dataset \citep{gsp_traffic} whose road networks are taken from OpenStreetMap.
    This dataset has 465 traffic networks in various regions.
    In this experiment, we use the 195 connected graphs in this dataset.
    The edge weights are calculated with \eqref{eqn: distance}.
    
    \item power-grid network ($N=294.33$ and $|\Ec| = 460.00$ on average): We also use a dataset comprising three graphs derived from power-grid data in Chile \citep{kim2018}.
    Each represents a distinct configuration of a power-grid.
    All components, like power plants, are modeled as nodes, and edge weights are determined by the voltage of the corresponding transmission line.
\end{enumerate}
Table \ref{tbl: ex-variation} provides $\underline{\Delta}_{\Lm_e}(\wv)$ for these three networks. 
As a baseline, for each physical graph topology, we generate random edge weights independently from a uniform distribution on [0, 1] and compare the normalized edge weight variation $\underline{\Delta}_{\Lm_e}(\wv)$ in the physical graph to that in the graph with random edge weights. 
For the random baseline, we generated $10$ independent realizations of random edge weights for each topology and report the average normalized variation over all realizations and graph instances.

Across all three datasets, the average normalized edge-weight variation of the physical graphs is lower than that of the random baseline. 
Although the magnitude of the reduction depends on the dataset, this consistent trend suggests that many realistic networks exhibit smooth variation in edge weights.

\begin{table}[tp]
    \caption{Some examples of normalized edge weight variation $\underline{\Delta}_{\Lm_e}(\wv)$ across the network. The table presents results for synthetic sensor networks with distance-based edge weights, traffic networks with distance-based edge weights, and power-grids with voltage as edge weights. The baseline is the variation of random edge weights generated from a uniform distribution.}
    \label{tbl: ex-variation}
    \centering
    \begin{tabular}{c|ccc}\hline
        Network Type & Sensor & Traffic \citep{gsp_traffic} & power-grid \citep{kim2018} \\ \hline\hline
        Number of graphs & 100 & 195 & 3 \\ \hline
        $\lambda_{e, \max}$ & 20.79 & 11.62 & 44.33 \\ \hline
        Random weights: avg. $\underline{\Delta}_{\Lm_e}(\wv)$ & 7.77 & 1.26 & 1.77 \\ \hline
        Physical weights: avg. $\underline{\Delta}_{\Lm_e}(\wv)$ & 5.99 & 0.68 & 1.47 \\ \hline
    \end{tabular}
\end{table}

\subsection{Local and Global Edge Weight Variations} \label{sec: edge_smoothness_original_graph}
The edge weight variation in Definition \ref{def: edge_weight_smoothness} can be directly derived from the original graph using the following proposition.
\begin{proposition}\label{prop: edge_smoothness_original_graph}
    Given a weighted graph $\Gc = (\Vc, \Ec, \Wm)$, $\Delta_{\Lm_e}(\wv)$ in \eqref{eqn: edge_variation} can be rewritten as 
    \begin{equation}\label{eqn: edge_smoothness_original_graph}
        \Delta_{\Lm_e}(\wv) = \sum_{i \in \Vc}[\dv]_i^2([\boldsymbol{\mu}_{w^2}]_i - [\boldsymbol{\mu}_w]_i^2) = \sum_{i \in \Vc} [\dv]_i^2[\boldsymbol{\sigma}^2_w]_i,
    \end{equation}
    where $[\dv]_i$ is the unweighted degree of node $i$ in $\Gc$, and we define $[\boldsymbol{\mu}_{\wv^2}]_i=\frac{1}{[\dv]_i}\sum_{j \in \Nc_i} [\Wm]^2_{ij}$, where $\Nc_i$ is the set of neighbors of node $i$, and $[\boldsymbol{\mu}_w]_i = \frac{1}{[\dv]_i}\sum_{j \in \Nc_i}[\Wm]_{ij}$.
\end{proposition}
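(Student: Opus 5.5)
The plan is to regroup the sum over line-graph edges in \eqref{eqn: edge_variation} according to the node of $\Gc$ at which the two original edges meet. Then, at each node, I will apply the elementary identity that rewrites a sum of squared pairwise differences as a variance.

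\textbf{Step 1 (structure of $\Ec_e$).} From Definition~\ref{def: Line Graph}, the off-diagonal entry $[\tilde{\Bm}^\top\tilde{\Bm}]_{\alpha\beta}$ with $\alpha\neq\beta$ counts the nodes of $\Gc$ shared by edges $\alpha$ and $\beta$. Because $\Gc$ is simple (no multi-edges or self-loops), two distinct edges share at most one node, so this entry lies in $\{0,1\}$. Subtracting $2\Id_{|\Ec|}$ removes the diagonal, since each edge has exactly two endpoints. Hence $\Am_e$ is binary, and $(\alpha,\beta)\in\Ec_e$ holds if and only if there is a \emph{unique} node $i\in\Vc$ incident to both edges. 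This gives a partition of $\Ec_e$ into disjoint sets
\[
\Ec_e=\bigsqcup_{i\in\Vc} P_i ,
\]
where $P_i$ is the set of unordered pairs of distinct edges incident to $i$. The main care needed in the whole proof is here: checking that this partition is exact, so that no line-graph edge is counted twice or missed. Uniqueness of the shared node is what guarantees this.

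\textbf{Step 2 (regrouping).} Substituting the partition into \eqref{eqn: edge_variation} gives
\[
\Delta_{\Lm_e}(\wv)=\sum_{i\in\Vc}\;\sum_{\{j,k\}\subset\Nc_i,\,j\neq k}\big([\Wm]_{ij}-[\Wm]_{ik}\big)^2 .
\]
This uses the fact that the edges incident to $i$ are in bijection with $\Nc_i$, and that the weight of the edge $(i,j)$ is $[\Wm]_{ij}$.

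\textbf{Step 3 (per-node identity).} For any real numbers $a_1,\dots,a_d$, the standard identity is
\[
\sum_{1\le j<k\le d}(a_j-a_k)^2=d\sum_j a_j^2-\Big(\sum_j a_j\Big)^2 .
\]
It follows by expanding $\tfrac12\sum_{j,k}(a_j-a_k)^2$. Applying it with $a_j=[\Wm]_{ij}$ and $d=[\dv]_i$, and using the definitions of $[\boldsymbol{\mu}_{\wv^2}]_i$ and $[\boldsymbol{\mu}_w]_i$, the right-hand side becomes
\[
[\dv]_i^2\big([\boldsymbol{\mu}_{\wv^2}]_i-[\boldsymbol{\mu}_w]_i^2\big)=[\dv]_i^2[\boldsymbol{\sigma}^2_w]_i .
\]
Isolated nodes ($[\dv]_i=0$) contribute zero to both sides, under the convention that the corresponding term vanishes. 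Summing over $i$ then yields \eqref{eqn: edge_smoothness_original_graph}.
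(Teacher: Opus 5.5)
Your proof is correct, but it follows a different route from the paper's.

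\textbf{The paper's route.} The paper works at the matrix level. It writes $\Delta_{\Lm_e}(\wv)=\wv^\top\Dm_e\wv-\wv^\top\tilde{\Bm}^\top\tilde{\Bm}\wv+2\wv^\top\wv$ and uses $[\tilde{\Bm}\wv]_i=\sum_{j\in\Nc_i}[\Wm]_{ij}$ for the middle term. It then imports the known line-graph degree formula $[\Dm_e]_\alpha=[\dv]_i+[\dv]_j-2$ for $\alpha=(i,j)$ from the literature. With that formula, $\sum_\alpha([\Dm_e]_\alpha+2)[\wv]_\alpha^2$ becomes $\sum_i[\dv]_i\sum_{j\in\Nc_i}[\Wm]_{ij}^2$ after symmetrizing over the two endpoints of each edge.

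\textbf{Your route.} You decompose $\Ec_e$ itself into edge-disjoint cliques, one on the set of edges incident to each node. This amounts to writing $\Lm_e$ as a sum of embedded complete-graph Laplacians. You then apply Lagrange's identity clique by clique.

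\textbf{What your route buys.} First, it is self-contained: it does not need the cited degree formula, and in fact recovers it, since edge $(i,j)$ has $[\dv]_i-1$ neighbors in the clique at $i$ and $[\dv]_j-1$ in the clique at $j$. Second, it makes explicit the simple-graph hypothesis that both proofs rely on: the diagonal of $\tilde{\Bm}^\top\tilde{\Bm}$ equals $2$ and the off-diagonal entries lie in $\{0,1\}$. The paper leaves this implicit in the definition of $\Am_e$. Third, it shows that each summand $[\dv]_i^2[\boldsymbol{\sigma}^2_w]_i$ is exactly the line-graph variation restricted to the edges meeting at node $i$. That is the cleanest justification for the paper's reading of it as a \emph{local} edge weight variation.

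\textbf{What the paper's route buys.} It is shorter, and it stays in the incidence-matrix notation used to define $\Am_e$.
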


\begin{proof}
    The edge weight variation \eqref{eqn: edge_variation} can be written as follows: 
    \begin{equation}
        \begin{aligned}
            \Delta_{\Lm_e}(\wv) &= \wv^\top \Dm_e \wv - \wv^\top\Am_e\wv\\
            &= \wv^\top \Dm_e \wv - \wv^\top\tilde{\Bm}^\top\tilde{\Bm}\wv + 2\wv^\top\wv\\
            &= \sum_{\alpha \in \Ec}([\Dm_e]_{\alpha} + 2)[\wv]_\alpha^2 -\sum_{i\in \Vc}\Bigl(\sum_{j \in \Nc_i} [\Wm]_{ij}\Bigr)^2\\
            &\overset{\text{(a)}}{=} \sum_{i\in \Vc}\biggl\{\sum_{j\in \Nc_i} \frac{[\dv]_i + [\dv]_j}{2}[\Wm]_{ij}^2 -\Bigl(\sum_{j \in \Nc_i} [\Wm]_{ij}\Bigr)^2\biggr\}\\
            &= \sum_{i \in \Vc}\biggl\{[\dv]_i \sum_{j \in \Nc_i} [\Wm]_{ij}^2 - \Bigl(\sum_{j \in \Nc_i} [\Wm]_{ij}\Bigr)^2\biggr\}\\
            &= \sum_{i \in \Vc}[\dv]_i^2([\boldsymbol{\mu}_{\wv^2}]_i - [\boldsymbol{\mu}_w]_i^2)\\
            &= \sum_{i \in \Vc}[\dv]_i^2 [\boldsymbol{\sigma}^2_w]_i,
        \end{aligned}
    \end{equation}
    where we use the relationship $[\Dm_e]_{\alpha} = [\dv]_i + [\dv]_j - 2$ for (a) \citep{wang2016a,yanagiya2025}.
\end{proof}

\noindent
Note that $[\dv]_i^2[\boldsymbol{\sigma}_w^2]_i$ in \eqref{eqn: edge_smoothness_original_graph} can be viewed as the \textit{local edge weight variation} at node $i$, which increases quadratically with the number of edges $[\dv]_i$ 
and with the variance of the local edge weights ($[\boldsymbol{\sigma}_w^2]_i$).
Interestingly, all the terms in \eqref{eqn: edge_smoothness_original_graph} can be easily calculated from local operations on the original adjacency matrix $\Wm$.
This avoids the computational cost of explicitly constructing the line graph. 

\subsection{Edge Weight Compressibility}\label{sec: compressibilty}
The decomposition in Proposition \ref{prop: edge_smoothness_original_graph} also provides a meaningful interpretation for compression.
Since the global edge weight variation $\Delta_{\Lm_e}(\wv)$ measures how much the edge weights vary across adjacent edges in the line graph, it directly relates to the compressibility of the edge weight vector $\wv$.

From transform coding theory, signals with low variation tend to have energy concentrated in low-frequency components, leading to sparse representations in the transform domain.
Conversely, signals with high variation require more bits to achieve the same reconstruction quality.
This principle applies directly to our framework: edge weights with small $\Delta_{\Lm_e}(\wv)$ are expected to be more compressible by the graph filter bank on the line graph.
Therefore, $\Delta_{\Lm_e}(\wv)$ can be used as a predictor of compression difficulty for a given weighted graph.

Edge weights with different levels of variation $\Delta_{\Lm_e}(\wv)$ are shown on the line graph corresponding to the original graph in Fig.~\ref{fig: toy_example}(a).
As $\Delta_{\Lm_e}(\wv)$ increases, adjacent edges exhibit larger differences in their weights. 
The quantitative relationship between edge-weight variation and compression performance is evaluated in Section \ref{sec: experiments}.

\begin{figure}[tp]
\centering
\subfloat[Original graph $\Gc$]{\includegraphics[width=0.24\linewidth]{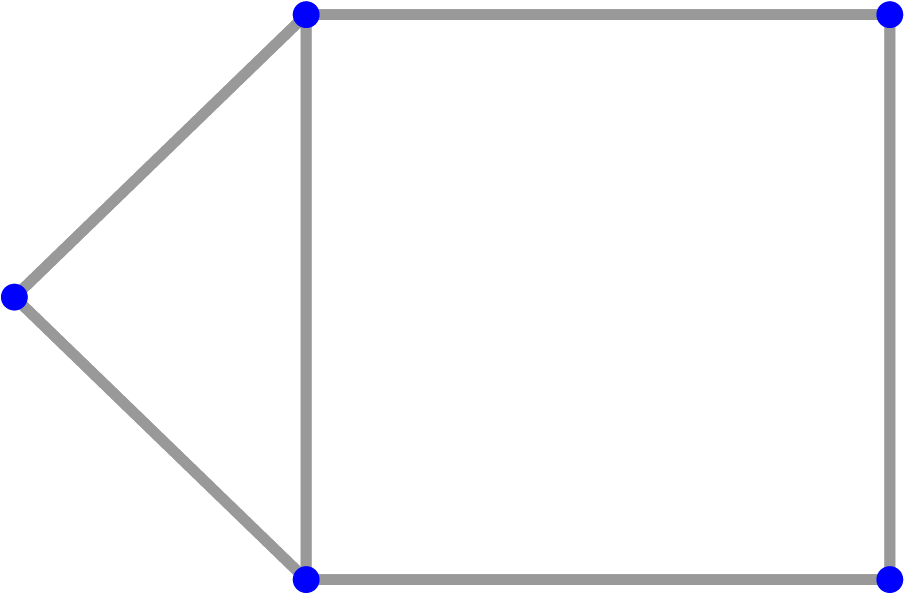}}
\subfloat[$\Delta_{\Lm_e}(\wv)=0.1$]{\includegraphics[width=0.24\linewidth]{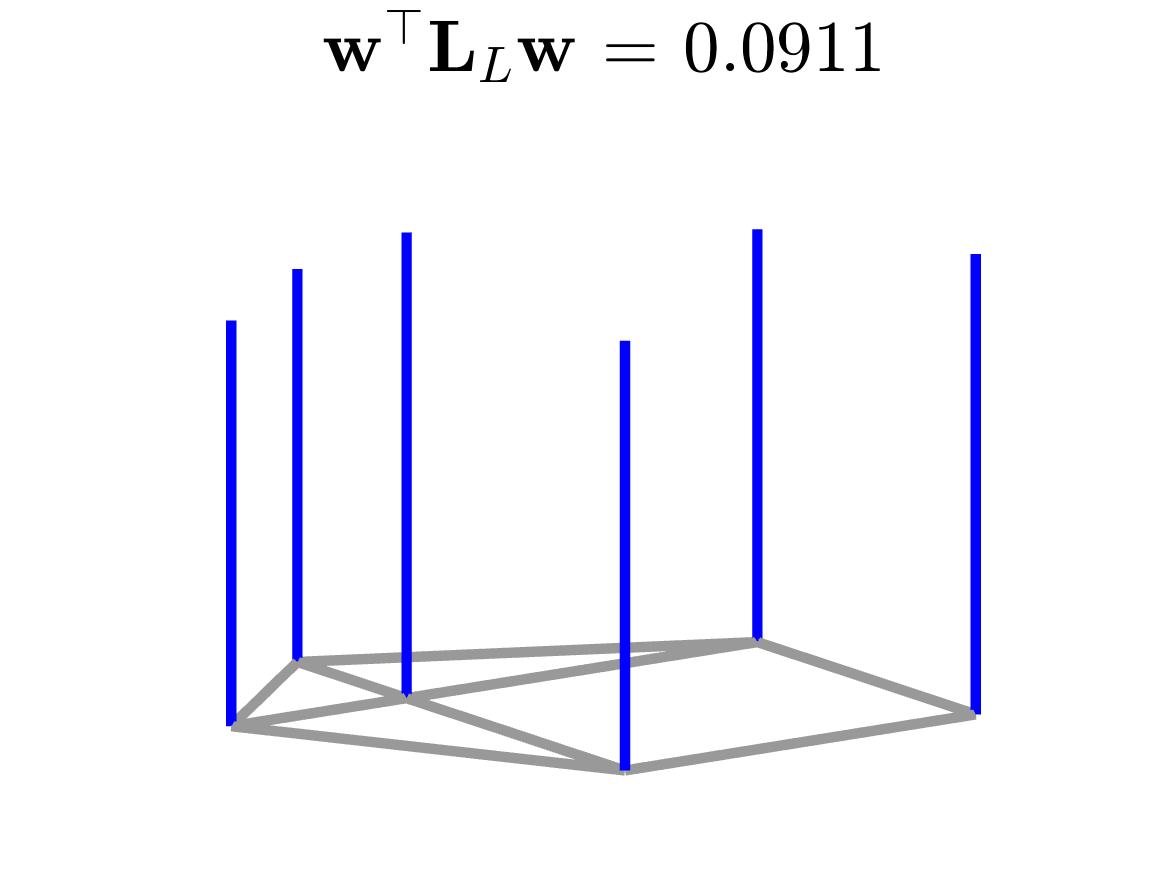}}
\subfloat[$\Delta_{\Lm_e}(\wv)=1.0$]{\includegraphics[width=0.24\linewidth]{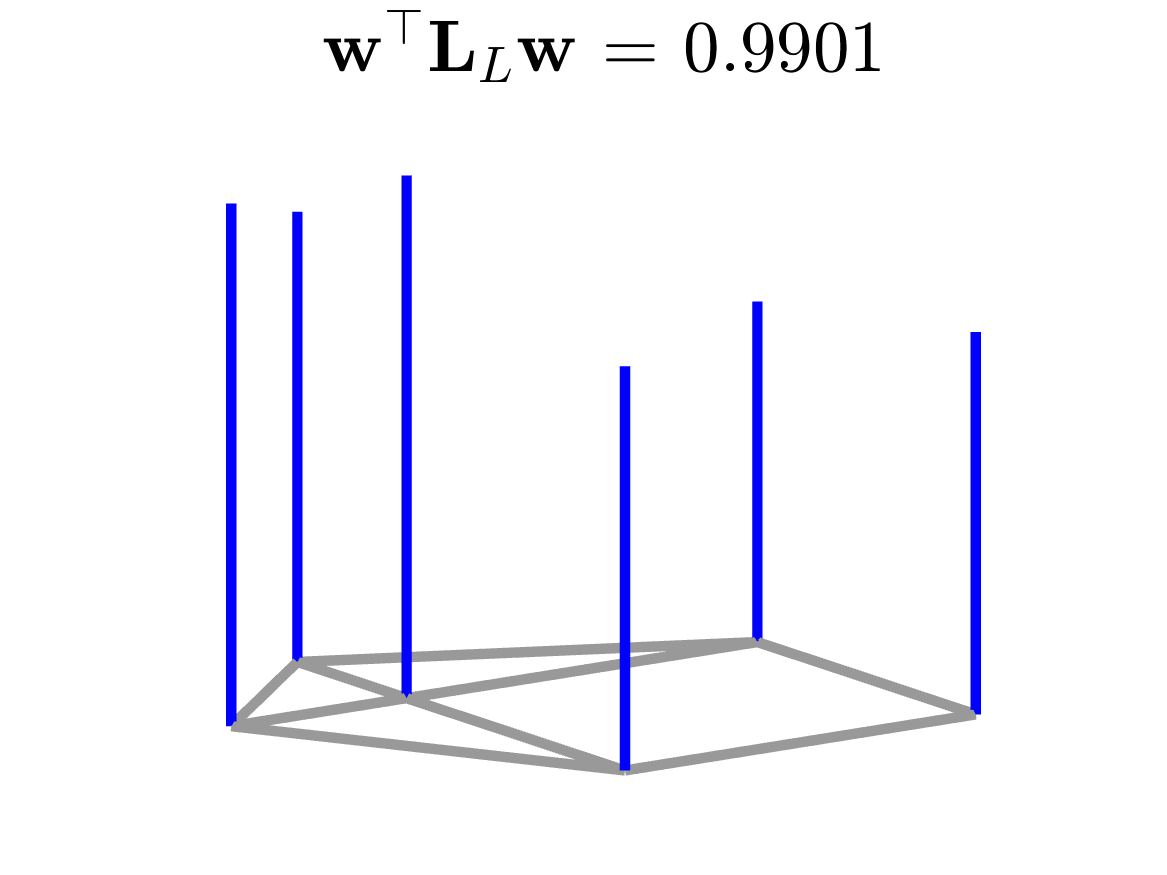}}
\subfloat[$\Delta_{\Lm_e}(\wv)=10$]{\includegraphics[width=0.24\linewidth]{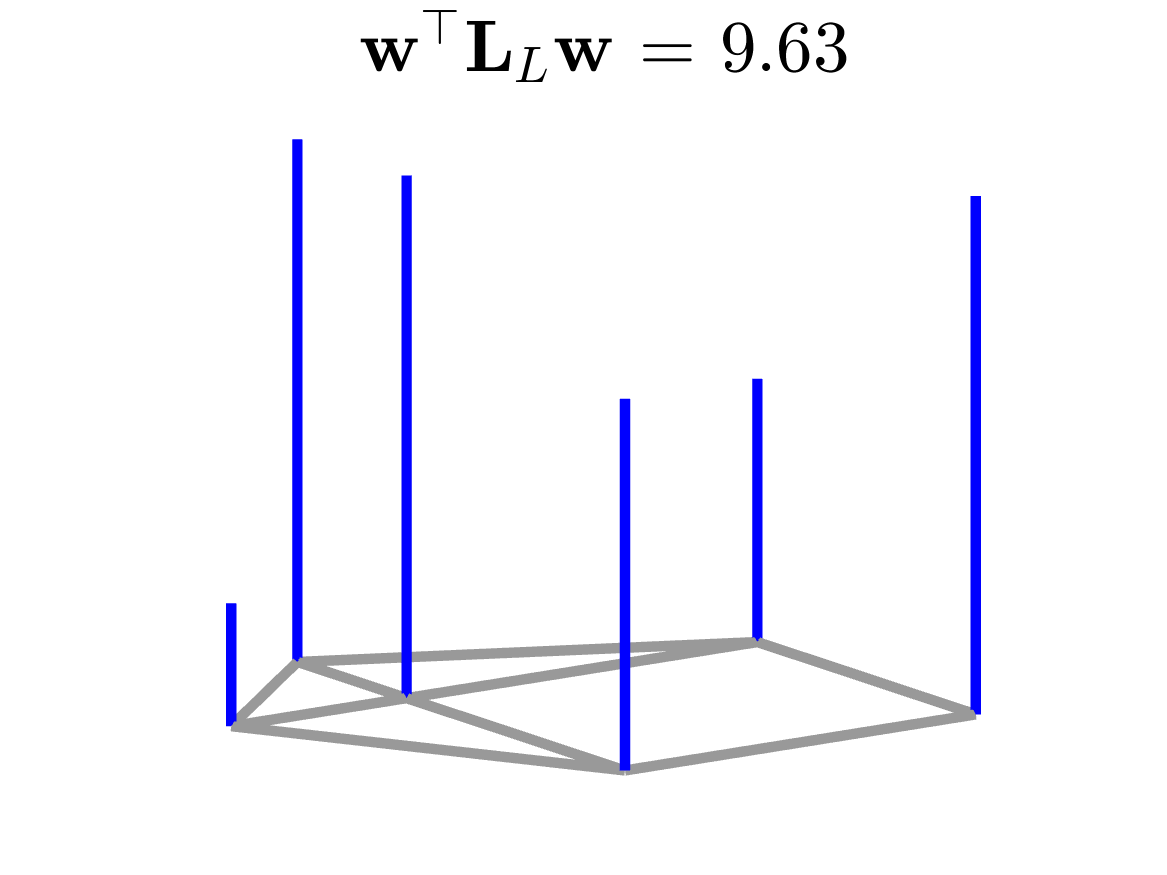}}
\caption{Illustration of edge weights with different smoothness levels on the same topology. (a) Original graph $\Gc$, (b)-(d) Edge weights visualized on the corresponding line graph $\Gc_e$ for three values of $\Delta_{\Lm_e}(\wv)$. The edge weight variation $\Delta_{\Lm_e}(\wv)$ increases from (b) to (d), indicating increasing compression difficulty. Smoother edge weights (b) exhibit nearly uniform values across adjacent edges, while less smooth edge weights (d) show significant variations.}
\label{fig: toy_example}
\end{figure}

\section{Experiments}\label{sec: experiments}
In this section, we conduct compression experiments on both synthetic and real-world weighted graphs and compare the proposed method with alternative ones.

\subsection{Synthetic Graphs}\label{sec: synthetic_experiments}
\subsubsection{Setup}
In the experiments on synthetic graphs, we use the following weighted graphs with $N  = 500$: random sensor graph with $|\Ec| = 1,781$, $10$-nearest-neighbor (10-NN) graph of \textit{Swiss Roll} point cloud with $|\Ec| = 4,188$, and random graph based on Erd\H{o}s--R\'{e}nyi model (ER model) with an edge probability $0.05$ and $|\Ec| = 3,097$.
Fig.~\ref{fig: graph} shows the synthetic graphs used.

\begin{figure}
    \centering
    \includegraphics[width=\linewidth]{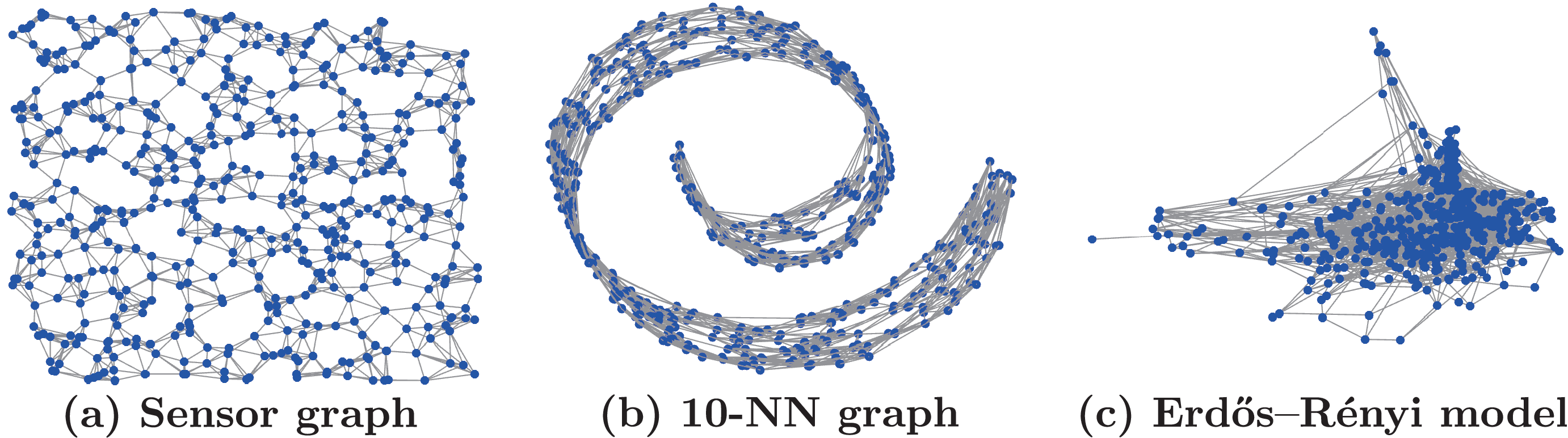}
    \caption{Examples of synthetic graphs with $N = 500$.}
    \label{fig: graph}
\end{figure}

The edge weights are generated according to different schemes depending on the graph type.
For the random sensor graph and the 10-NN graph, edge weights are determined by the distance-based function \eqref{eqn: distance}.
For the ER model, each edge weight is independently drawn from a truncated normal distribution $\Nc (\mu, \sigma^2)$ with $\mu = 1$ and variance $\sigma^2$ and the nonnegative interval $[0,\infty)$; the variance parameter is varied as $\sigma \in \{0.01, 0.05, 0.1, 0.5\}$ to investigate the relationship between edge smoothness and compression performance.

The proposed compression methods use the following critically sampled graph filter banks on the unweighted line graph: 
Graph-QMF filter bank with Meyer wavelets (abbreviated as \textbf{Proposed-QMF}) \citep{narang2012}, 
GraphBior filter bank equipped with biorthogonal filter kernels (abbreviated as \textbf{Proposed-Bior}) \citep{narang2013}, 
Generalized graph filter bank equipped with the biorthogonal filter kernels (abbreviated as \textbf{Proposed-GGFB}) \citep{pavez2023}, and Graph filter bank with graph frequency domain sampling (abbreviated as \textbf{Proposed-SS}) \citep{sakiyama2019}.
Since GraphBior and Graph-QMF require a graph bipartition, we apply the Harary decomposition \citep{narang2012, harary1977} to the graph.
All four methods are implemented on the normalized Laplacian of the unweighted line graph with the following parameters. 
Proposed-Bior uses the biorthogonal filter design with $N_{\mathrm{lo}} = 7$ and $N_{\mathrm{hi}} = 9$ over the spectral interval $[0,2]$. 
Proposed-QMF uses the Meyer wavelet kernel with a Chebyshev polynomial approximation of length $24$. 
Note that we use two-channel filter banks.

Uniform quantization is performed on the transformed coefficients, followed by Huffman coding.
The quantization step was set to $0.001$, $0.005$, $0.01$, $0.05$, $0.1$, $0.5$, $1$.
Since there is no competing transform coding for edge weights other than the proposed method, we compare the reconstruction accuracies with the following preprocessing methods.
To make the role of topology preservation explicit, we divide the comparison methods into two groups. 
\begin{description}[style=unboxed]
    \item [\textbf{Direct compression methods:}] \ 
    \begin{description}[style=unboxed,leftmargin=0cm]
        \item[\textbf{Direct-GFB}]: This simply transforms $\Wm$ into the transformed coefficient $\Qm_c$ via a (node-domain) two-channel graph filter bank as follows.
        \begin{equation}
            \Qm_c = \tilde{\Sm}_c^\top \tilde{\Hm}_c \Wm \quad \mathrm{for\ } c = 0,1,
        \end{equation}
        where $\tilde{\Sm}_c^\top$ is a downsampling matrix for the $c$th channel and $\tilde{\Hm}_c$ is the $c$th filter of a graph filter bank.
        We use the GraphBior \citep{narang2013} for the graph Laplacian $\tilde{\Lm}$ of the unweighted graph.
        Transformed coefficients $\Qm_c$ are then quantized and encoded as in the proposed method.
        \item[\textbf{Direct-W}]: This method directly applies uniform quantization and sparse representation to the weighted adjacency matrix $\Wm$ without any transform followed by entropy coding.
    \end{description}

    \item [\textbf{Topology-preserving methods:}] \ 
    \begin{description}[style=unboxed,leftmargin=0cm]
        \item[\textbf{Ordered-DCT}]: The edge weights are sorted in descending order to form a one-dimensional sequence.
        Let $\piv$ denote the permutation satisfying $[\wv]_{\pi_1} \geq [\wv]_{\pi_2} \geq \ldots \geq [\wv]_{\pi_{|\Ec|}}$.
        The reordered edge-weight vector $\wv_\pi$ is then defined as $[\wv_\pi]_\alpha = [\wv]_{\pi_\alpha}$.
        A one-dimensional discrete cosine transform (DCT) is applied to $\wv_\pi$, followed by uniform quantization and entropy coding.
        The permutation $\piv$ required to restore the original edge order is also transmitted and included in the bitrate.
        At the decoder, the inverse DCT is first applied, after which the inverse permutation is used to assign the reconstructed weights to their original edges.
        \item[\textbf{Binary}]: As a baseline, we use the binary adjacency matrix $\Am$.
    \end{description}
\end{description}

Reconstruction accuracy is measured using the signal-to-noise ratio (SNR).
Compression efficiency is measured in bits per edge (BPE), defined as the total transmitted bits of the weighted adjacency matrix divided by the number of edges $|\Ec|$.
In this paper, the total bitrate of weighted adjacency matrices is calculated as follows:
Let $B_{\Am}$ be the number of bits required to transmit the binary topology losslessly. 
For the undirected graphs considered here, the upper triangular part of $\Am$ is sparsely represented, and the positions of the ones are Huffman coded together with the corresponding dictionary and metadata for the graph size. 
For a topology-preserving method, let $B_{\mathrm{weight}}$ be the total number of bits required for all encoded edge-weight or transform-coefficient sequences. 
Then, BPE is defined as follows:
\begin{equation}\label{eqn: total_rate}
    B_{\mathrm{total}} = \frac{B_{\Am} + B_{\mathrm{weight}}}{|\Ec|}
\end{equation}
Ordered-DCT additionally includes the bits $B_{\pi}$ required to transmit the actual permutation obtained by sorting the edge weights in descending order.
Direct-GFB and Direct-W do not separately represent the binary topology. 
Their reported rates are therefore the total bits required to encode their transformed matrices or weighted adjacency matrices, including nonzero positions for sparse representation, quantization step, Huffman dictionaries, and metadata, divided by $|\Ec|$; $B_{\Am}$ is not added separately.

\begin{figure}[tp]
\centering
\subfloat[Sensor graph]{\includegraphics[width=0.5\linewidth]{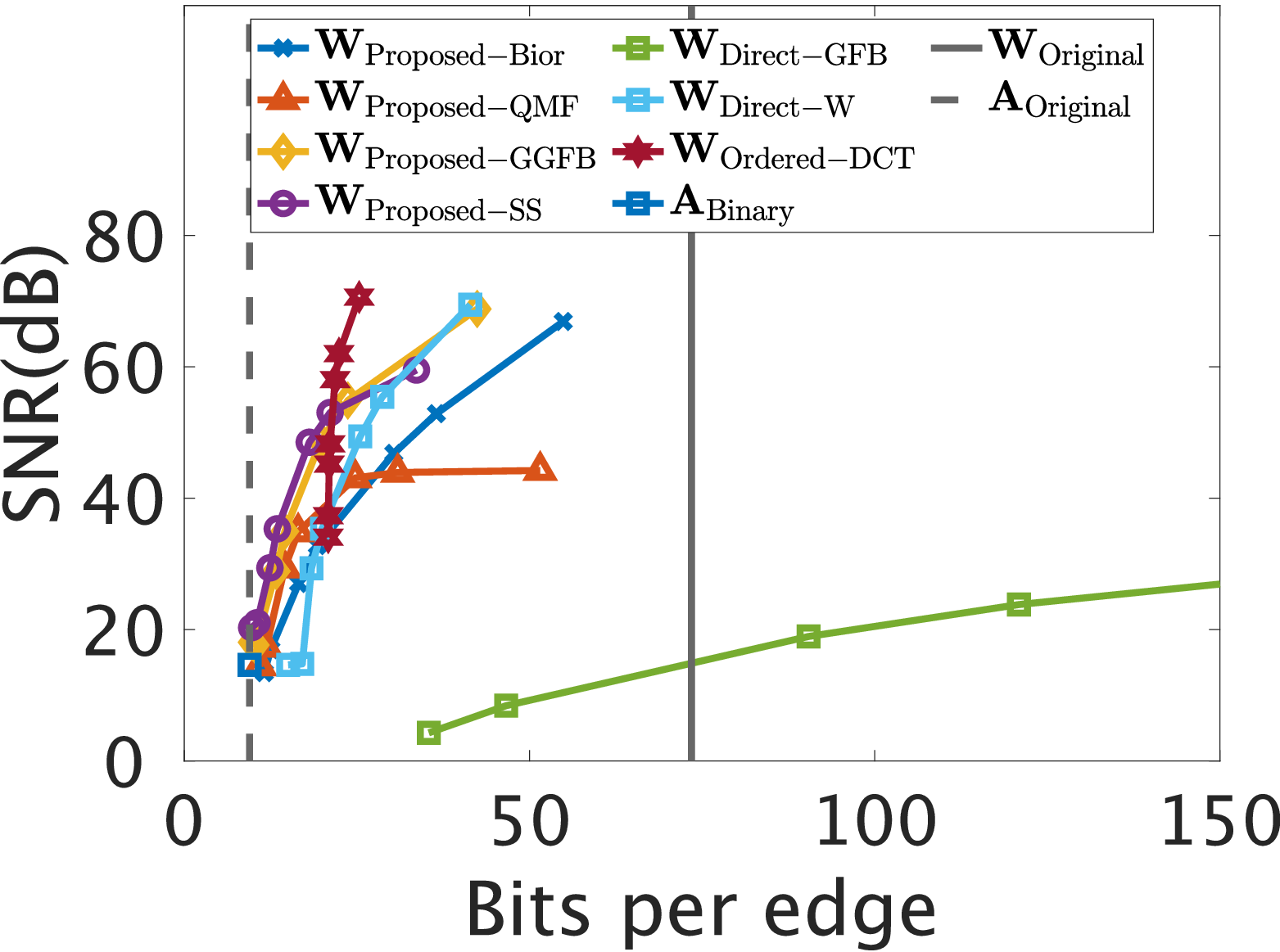}}
\subfloat[$10$-NN graph]{\includegraphics[width=0.5\linewidth]{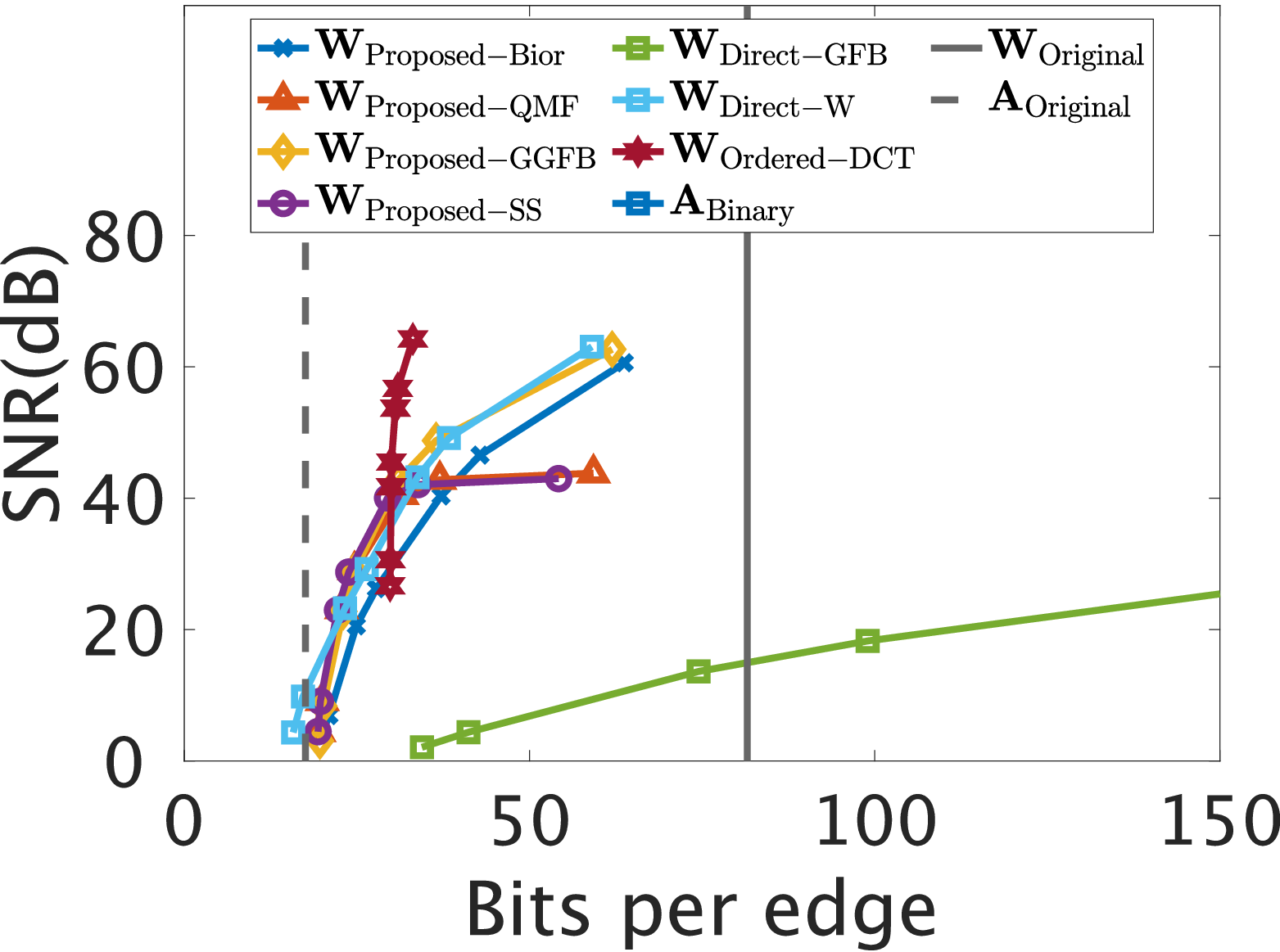}}
\caption{Performance comparison of SNRs of the reconstruction error in dB. Averaged results after $10$ runs are shown. The horizontal axis shows the total transmitted bits per edge. The black solid and dashed lines indicate the bits per edge when the weighted and binary adjacency matrices are losslessly compressed, respectively.}
\label{fig: exp1}
\end{figure}

\subsubsection{Results}
Fig.~\ref{fig: exp1} compares the reconstruction errors for the random sensor graph and 10-NN graph, where the edge weights are determined by the distance-based function.
The proposed methods yield higher SNR and smaller BPEs than alternative preprocessing methods across all compression ratios.
The alternative approaches result in larger BPEs than simply storing the weighted adjacency matrix $\Wm$ after lossless compression ($\Wm_{\mathrm{Original}}$).
This result demonstrates that when edge weights are determined by physical properties such as geometric distance, the proposed framework effectively exploits the inherent smoothness of edge weights on the line graph, resulting in superior reconstruction performance.
Direct-W, which directly quantizes the weighted adjacency matrix without any transform, achieves lower SNRs than the proposed methods because it cannot exploit the inter-edge relationships captured by the line graph structure.
Since the Ordered-DCT is a method that performs the 1-D DCT after reordering, the remaining difference between this baseline and the proposed method further demonstrates the benefits of utilizing the relationships among edges.
Direct-GFB also shows limited performance since the node-domain graph filter bank is not designed to compress edge weights.

\begin{figure}[tp]
\centering
\subfloat[ER model ($\sigma = 0.01$)]{\includegraphics[width=0.5\linewidth]{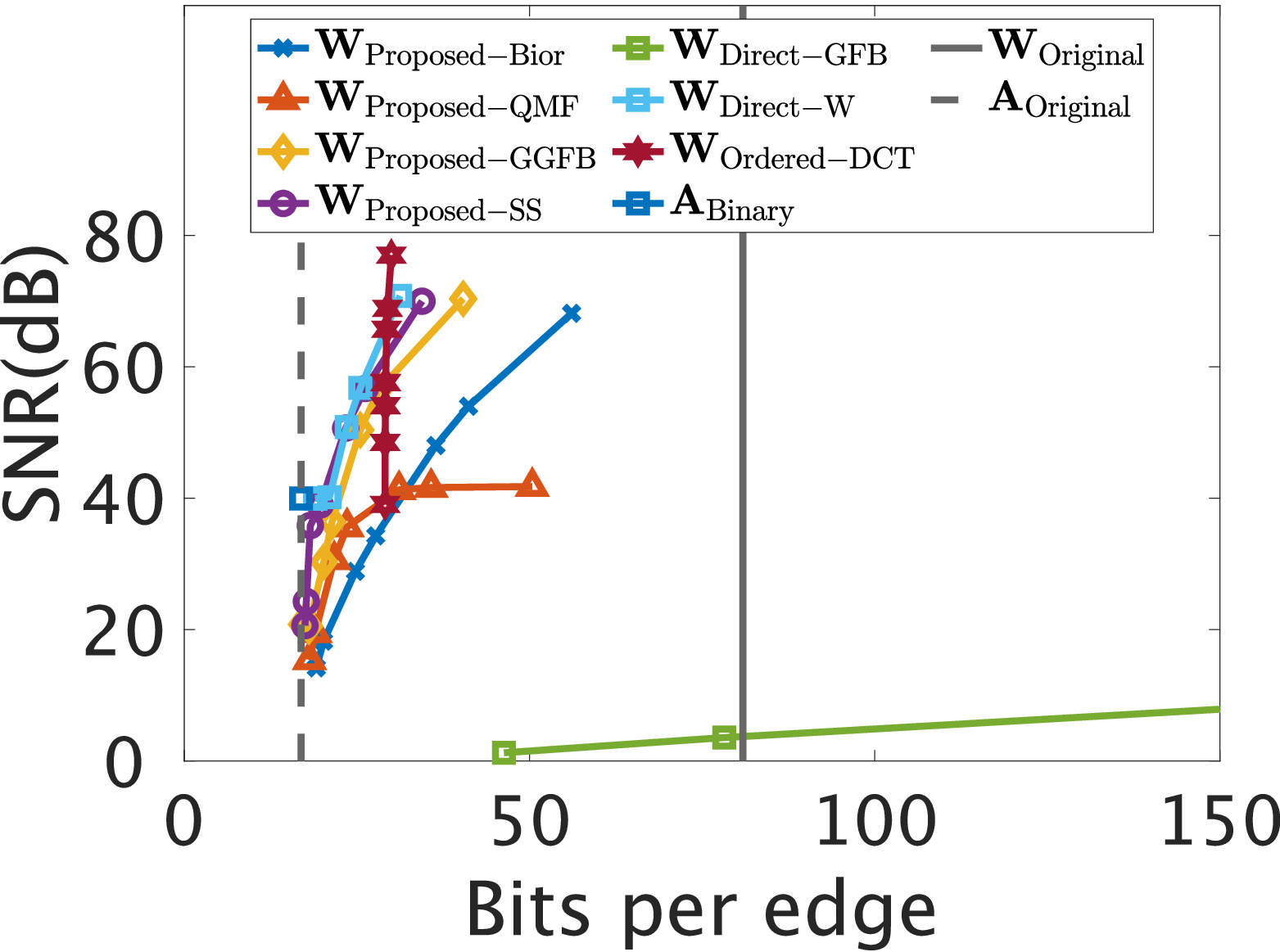}}
\subfloat[ER model ($\sigma = 0.05$)]{\includegraphics[width=0.5\linewidth]{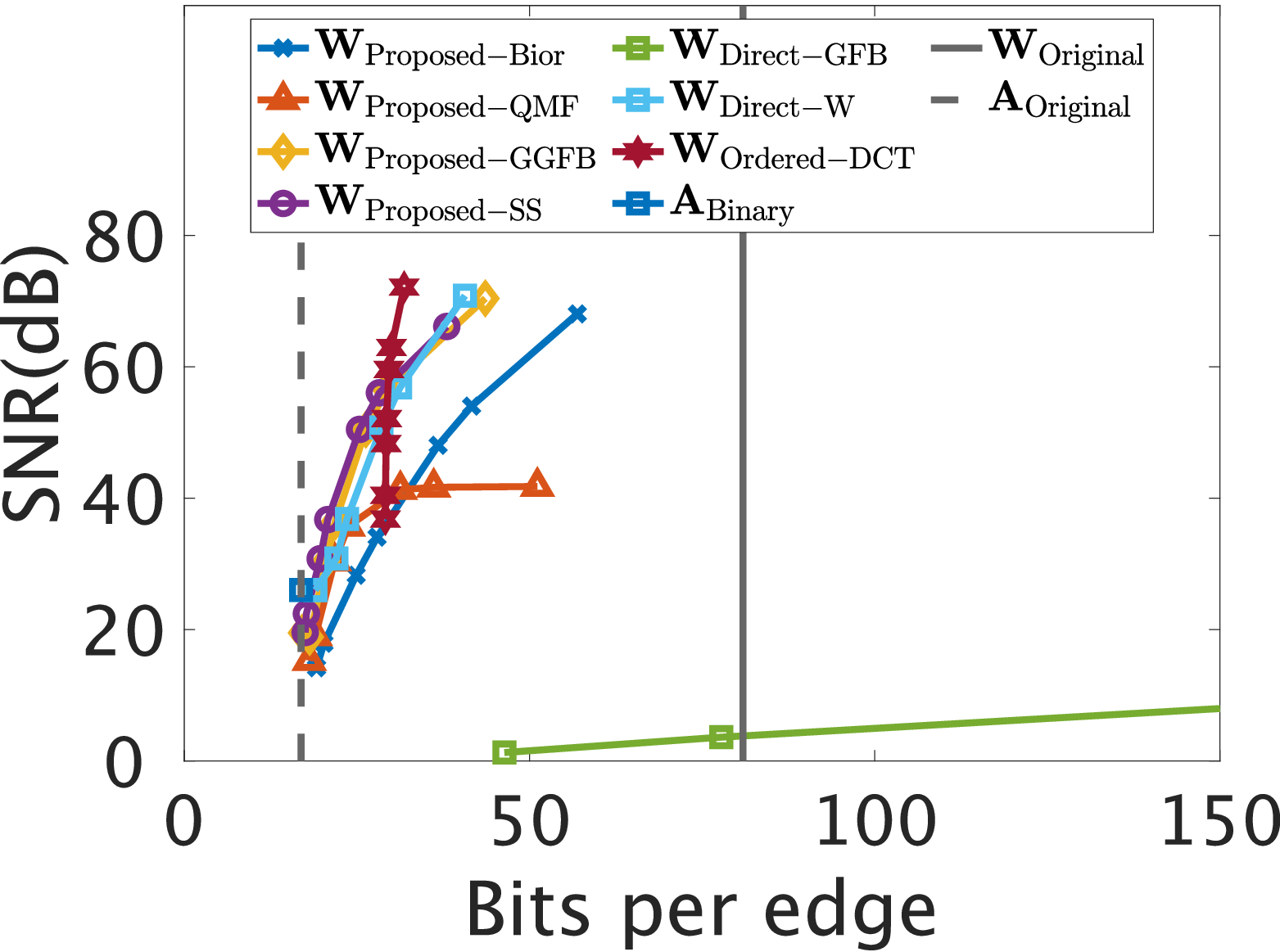}}\\
\subfloat[ER model ($\sigma = 0.1$)]{\includegraphics[width=0.5\linewidth]{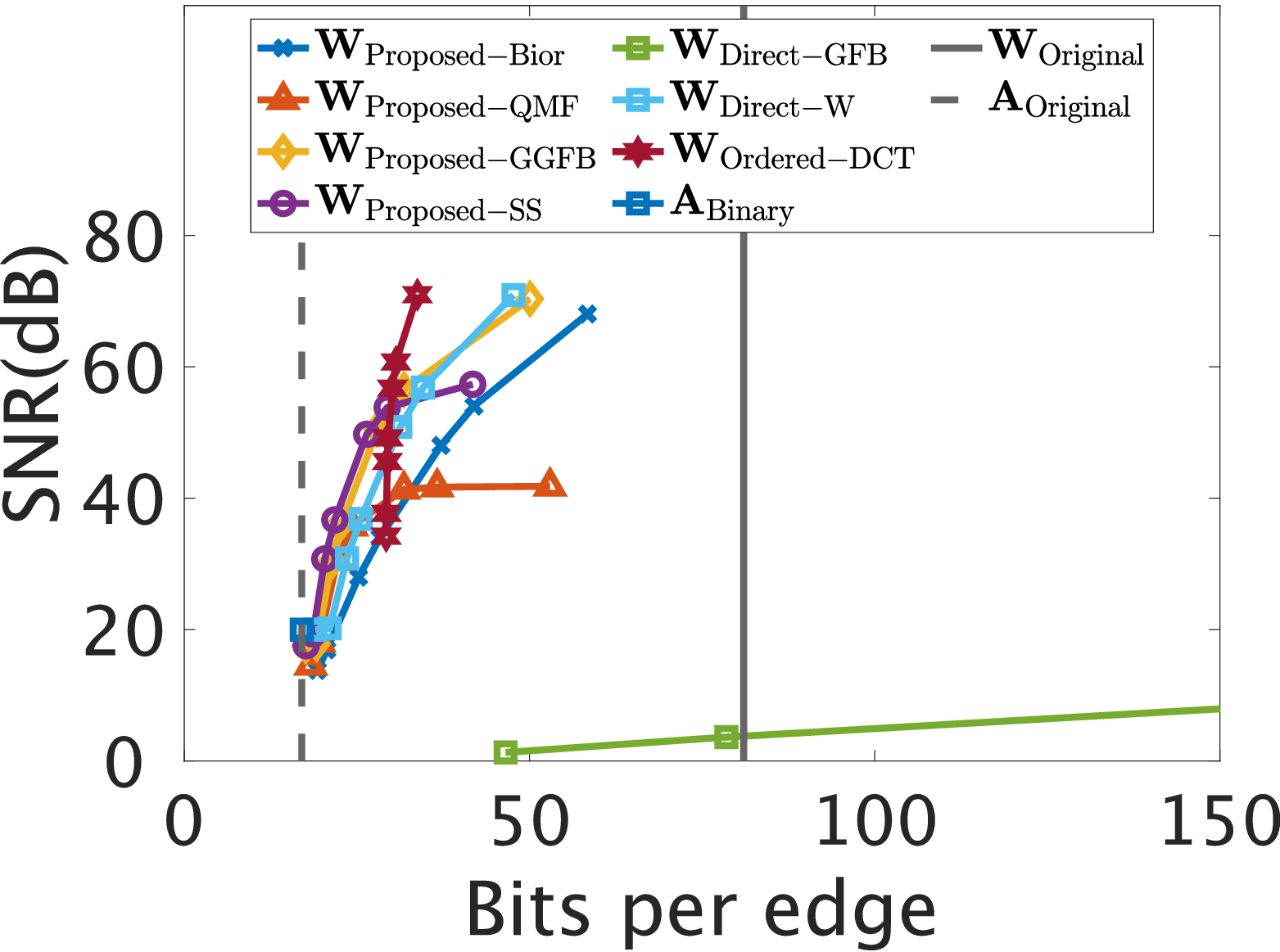}}
\subfloat[ER model ($\sigma = 0.5$)]{\includegraphics[width=0.5\linewidth]{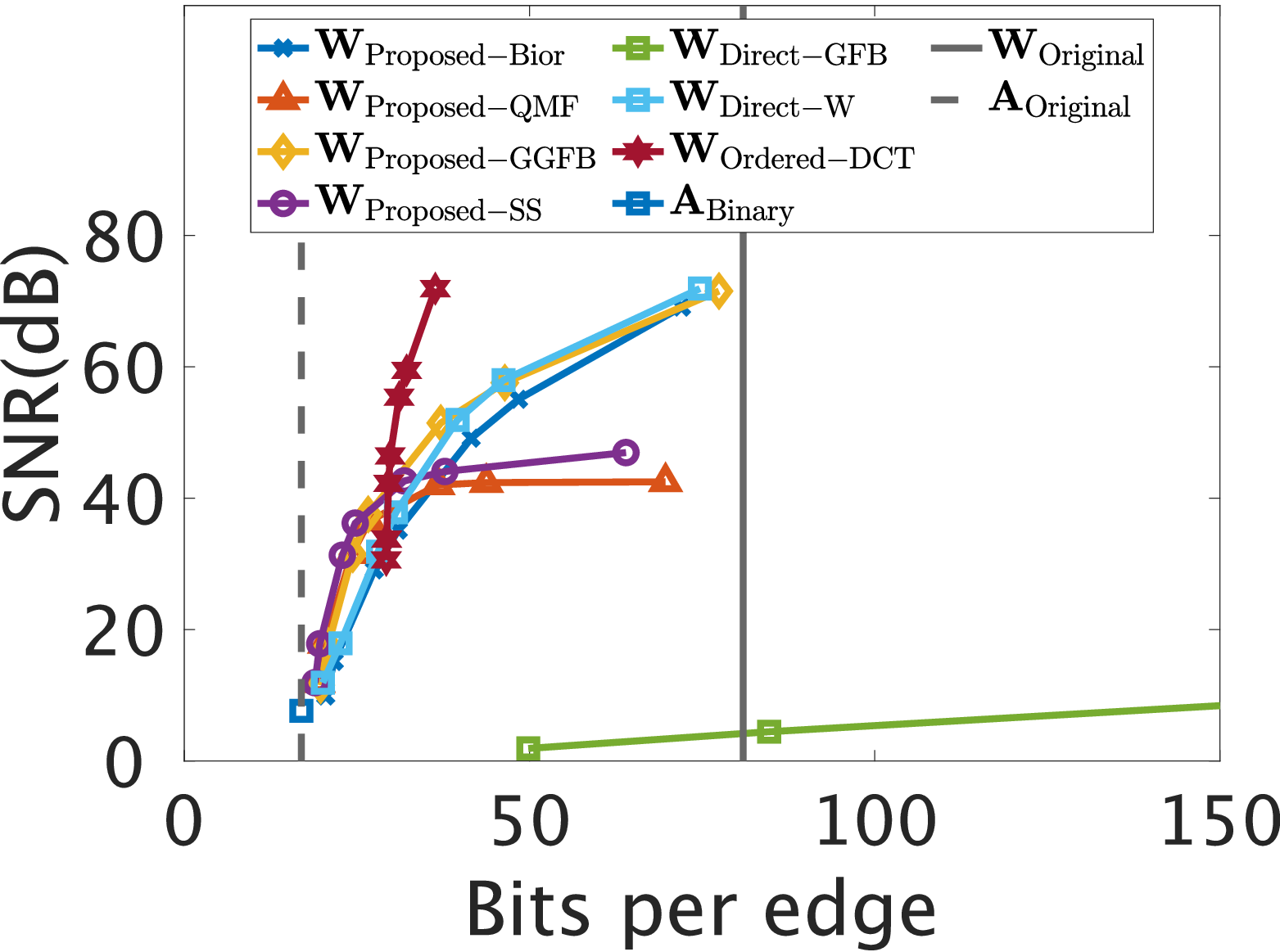}}
\caption{Performance comparison of SNRs of the reconstruction error in dB. Averaged results after $10$ runs are shown. The horizontal axis shows the total transmitted bits per edge. The black solid and dashed lines indicate the bits per edge when the weighted and binary adjacency matrices are losslessly compressed, respectively.}
\label{fig: exp_ER}
\end{figure}

The performance among the proposed methods is similar but slightly different due to the type of graph filter banks.
Proposed-SS achieves better reconstruction accuracy than the other filter banks because its graph-frequency-domain sampling explicitly handles aliasing.
Since Proposed-GGFB does not require graph bipartitioning, its performance is better than that of Proposed-Bior and Proposed-QMF.

Fig.~\ref{fig: exp_ER} shows the results for the ER model with varying edge smoothness controlled by the variance parameter $\sigma$.
As $\sigma$ increases, the edge weights become less smooth on the line graph, making compression more challenging.
Even in such cases, the proposed methods consistently outperform the alternative methods across all tested smoothness levels.
This indicates that the proposed framework achieves high reconstruction accuracy regardless of the compressibility of the edge weights.
When $\sigma$ is small (e.g., $\sigma = 0.01$), the edge weights are nearly constant, resulting in high SNRs for all methods.
However, as $\sigma$ increases and the edge weights become less smooth, the performance gap between the proposed methods and the alternatives widens.
This behavior validates our smoothness assumption discussed in Section \ref{sec: edge_smoothness}: less-smooth edge weights are more difficult to compress, yet the proposed line graph-based transform coding approach still provides significant advantages over direct compression methods.
Furthermore, it is observed that as $\sigma$ increases, the BPE required to achieve a comparable SNR also increases even for the same topology.
This observation confirms that the edge smoothness $\Delta_{\Lm_e}(\wv)$ serves as an effective measure of compression difficulty: graphs with larger edge weight variations require more bits per edge to achieve the same reconstruction quality.

\begin{figure}[tp]
\centering
\subfloat[Australia-Melbourne]{\includegraphics[width=0.5\linewidth]{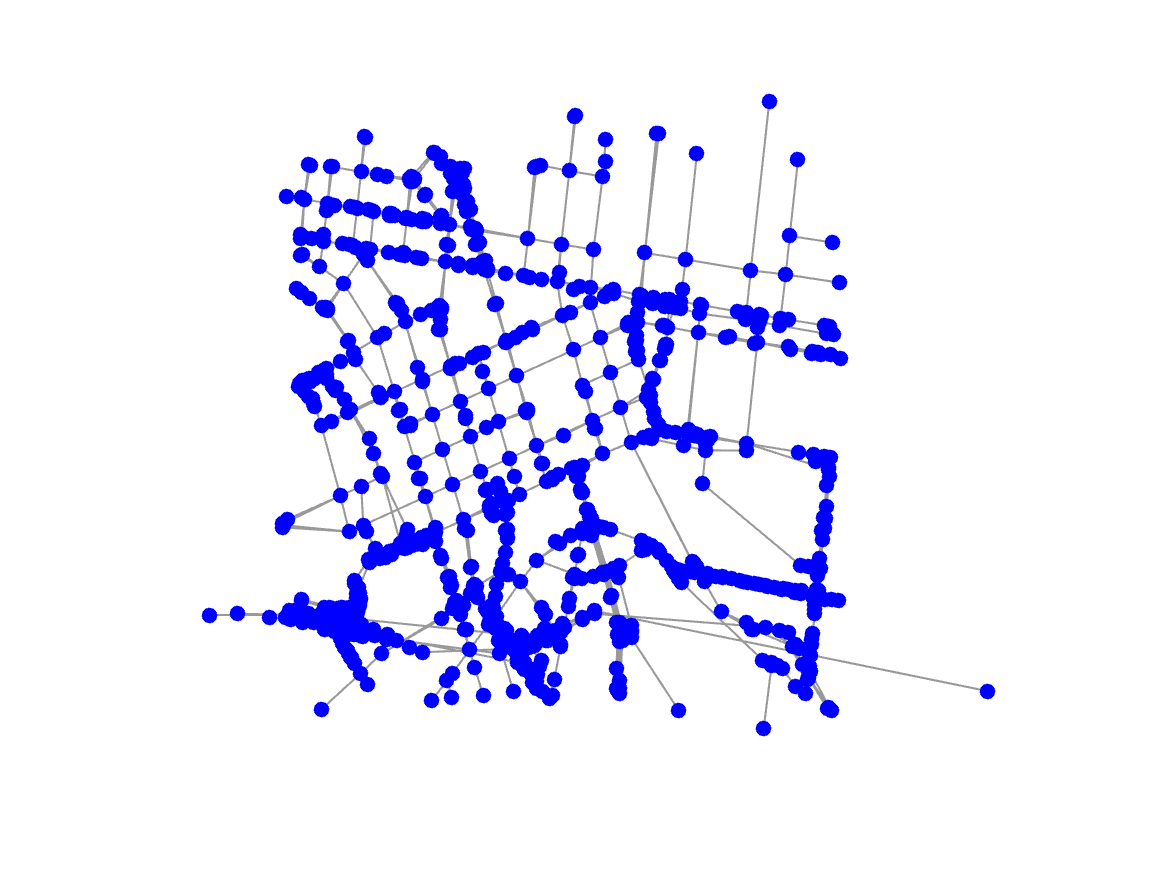}}
\subfloat[Brazil-Sao Paulo]{\includegraphics[width=0.5\linewidth]{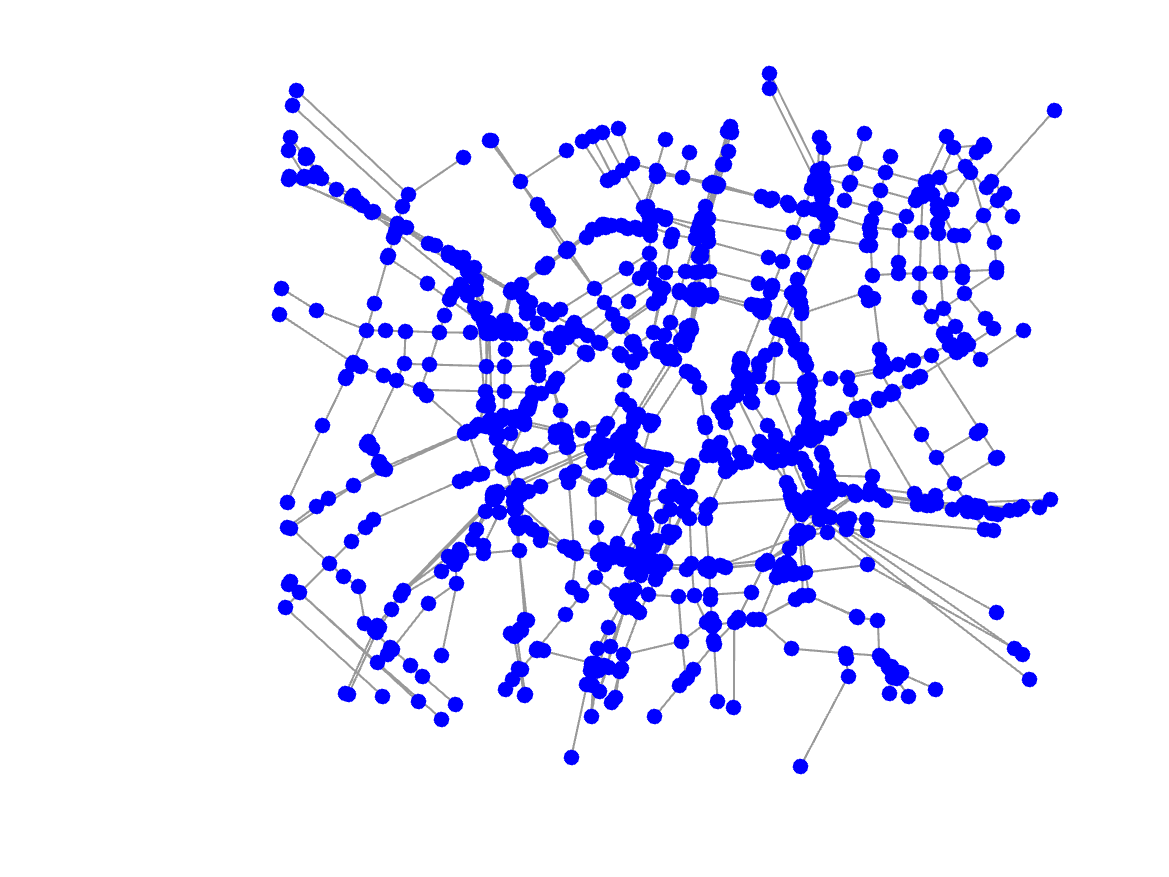}}\\
\subfloat[China-Shanghai]{\includegraphics[width=0.5\linewidth]{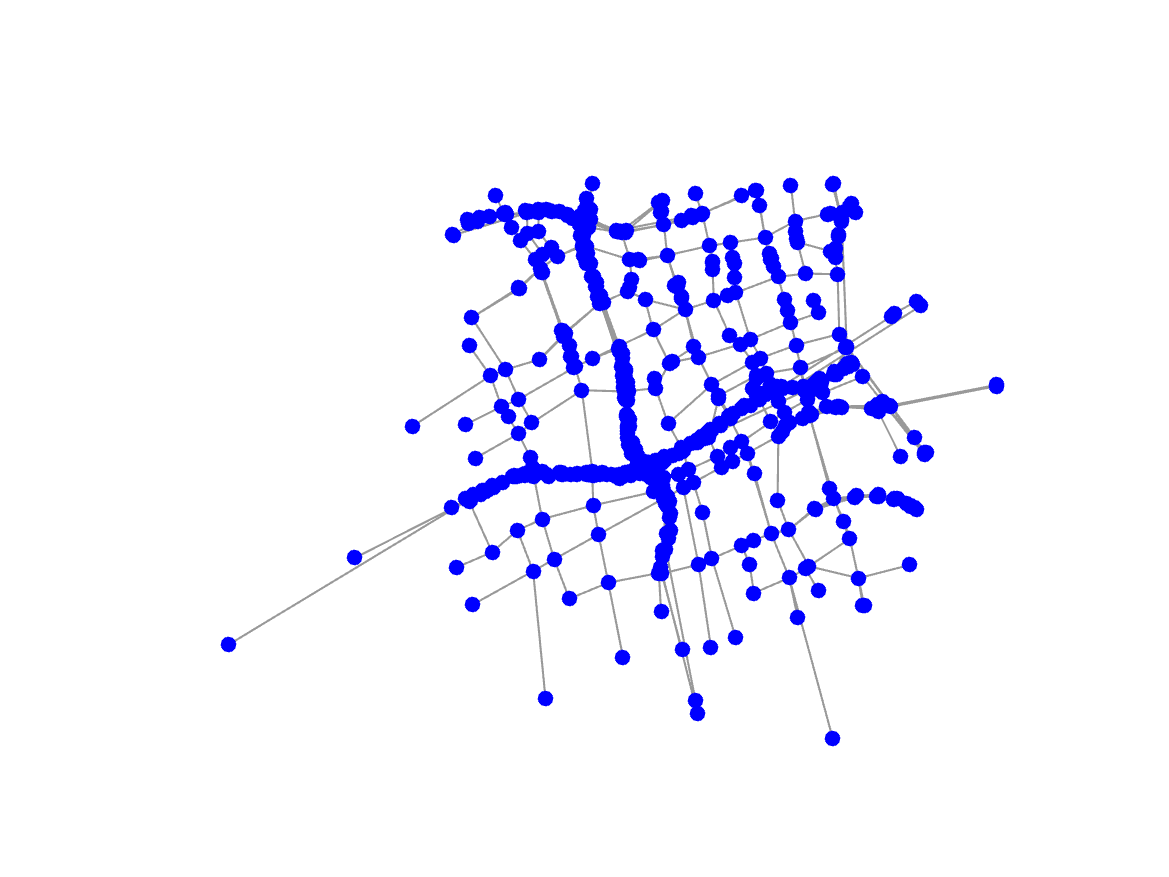}}
\subfloat[Germany-Cologne]{\includegraphics[width=0.5\linewidth]{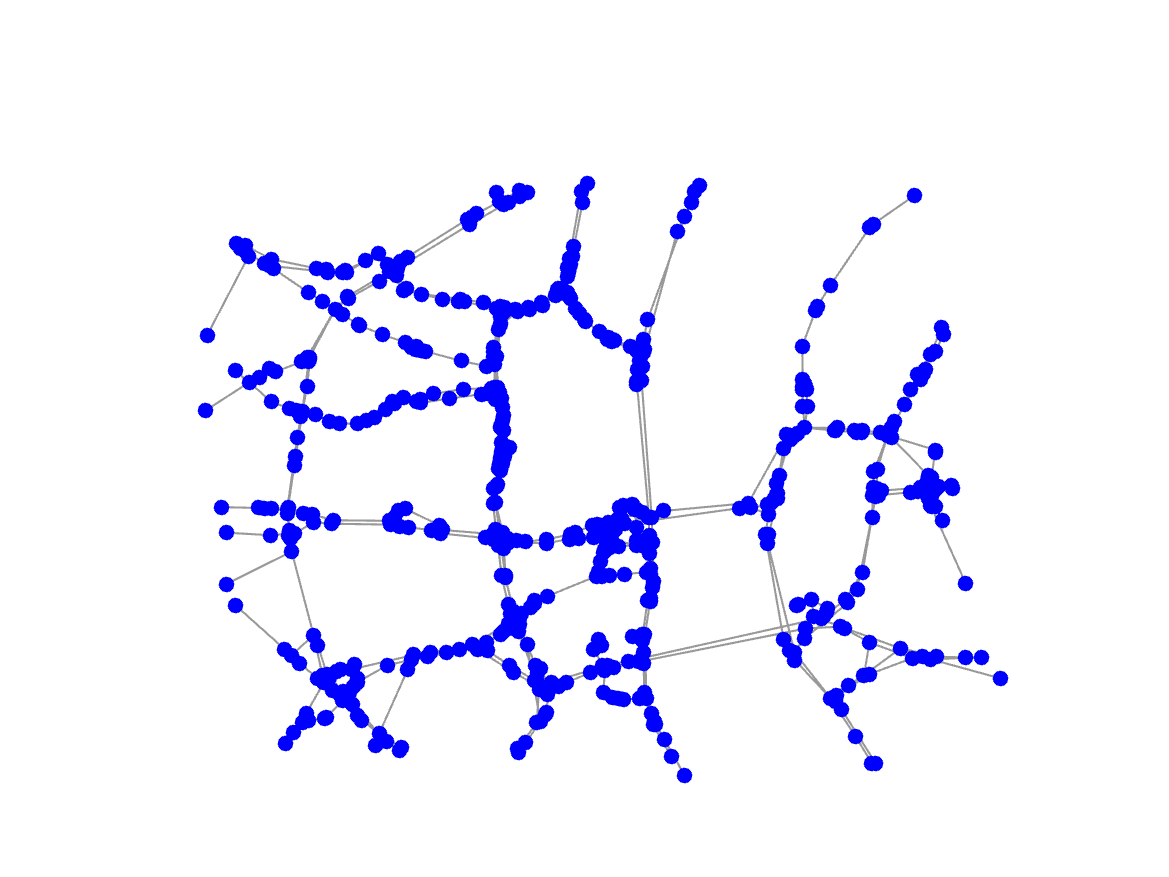}}\\
\subfloat[UAE-Dubai]{\includegraphics[width=0.5\linewidth]{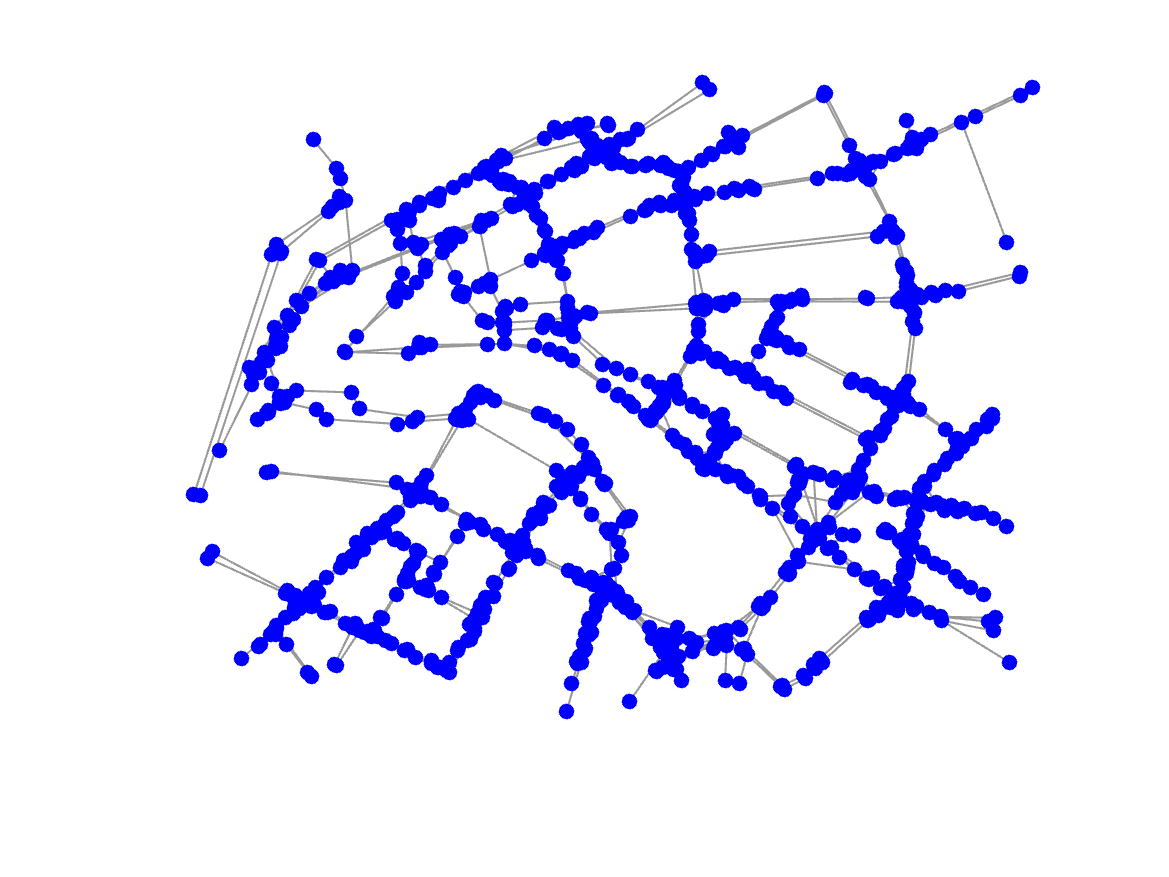}}
\subfloat[USA-Chicago]{\includegraphics[width=0.5\linewidth]{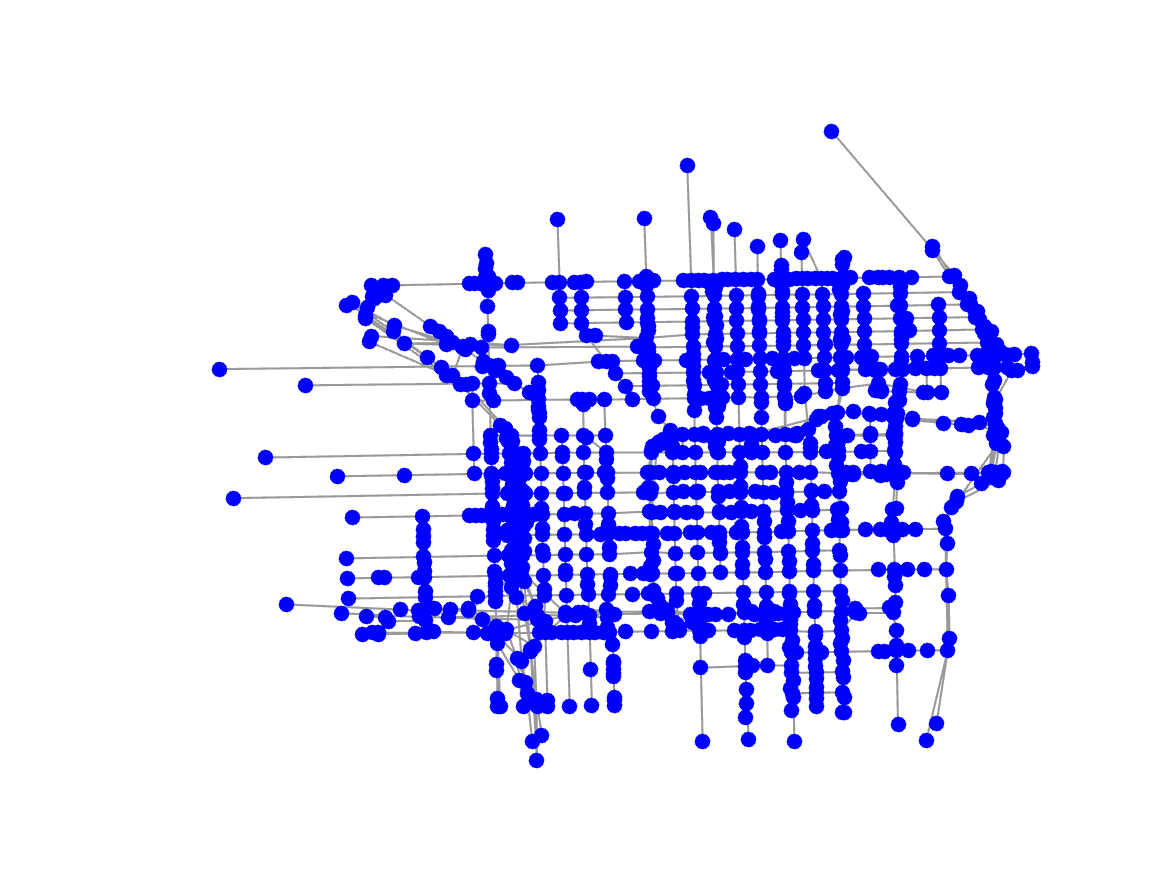}}

\caption{Traffic networks used in the real-world experiment. Each sub-caption denotes a pair of country and city names.}
\label{fig: ex-graph_real}
\end{figure}

\begin{figure}[tp]
    \centering
    \subfloat[Power-grid (WithoutTap)]{\includegraphics[width=0.3\linewidth]{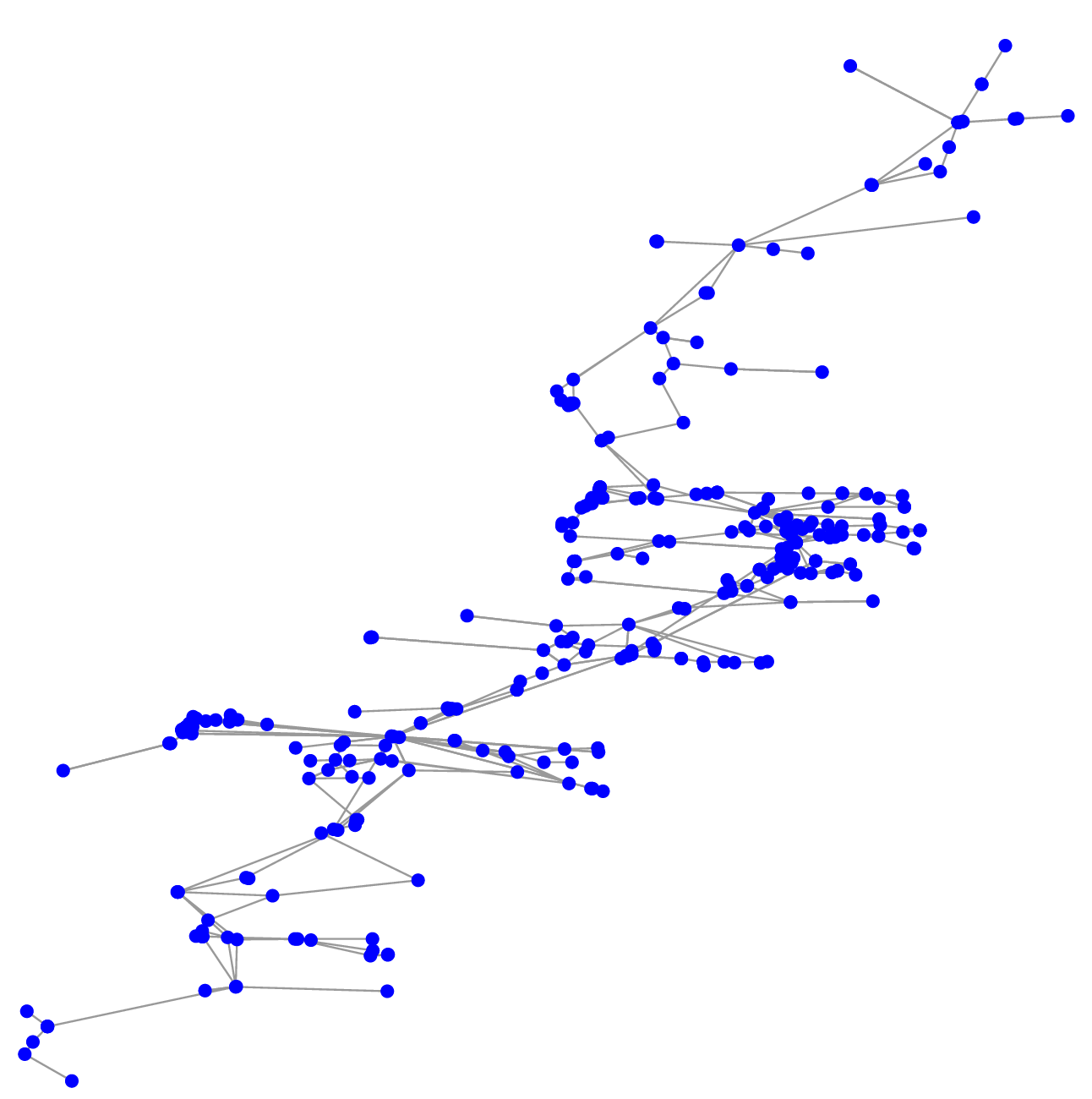}}
    \subfloat[Power-grid (WithTap)]{\includegraphics[width=0.3\linewidth]{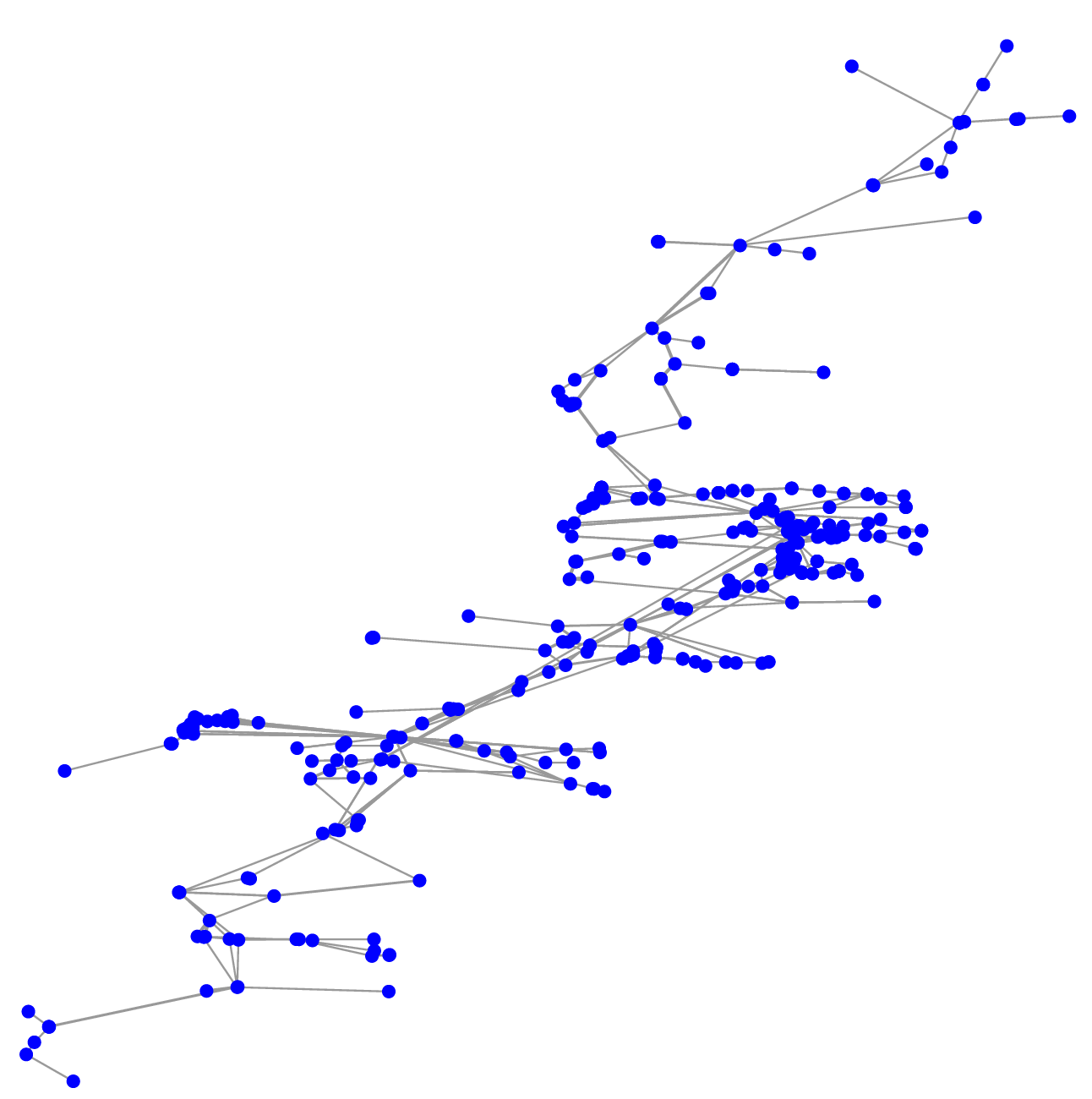}}
    \subfloat[Power-grid (Reduced)]{\includegraphics[width=0.3\linewidth]{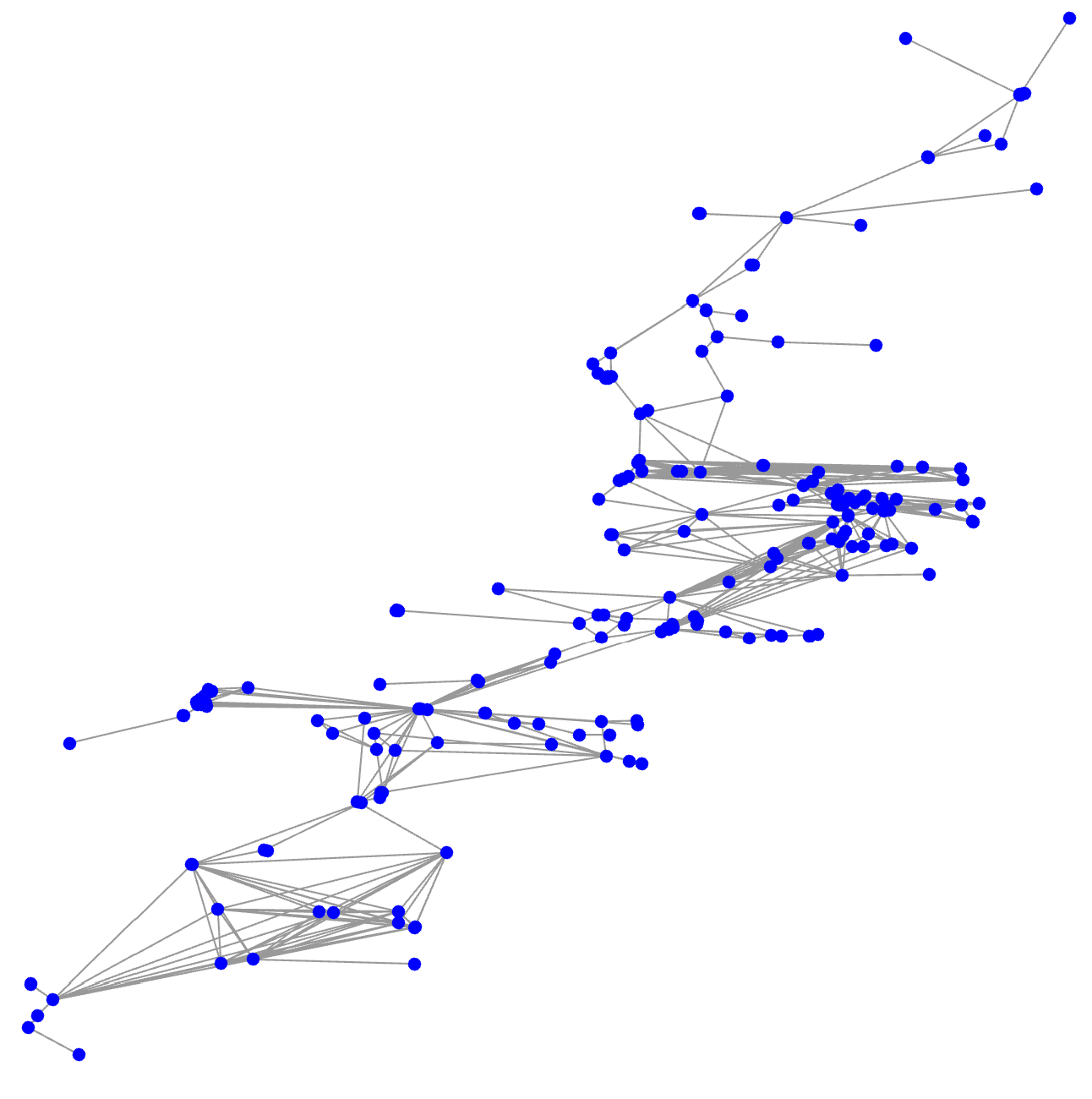}}
    \caption{Chilean power-grids used in the real-world experiment. Each sub-caption denotes a graph type in \citep{kim2018}.}
    \label{fig: ex-graph_powergrid}
\end{figure}

\subsection{Real-World Data}
We now conduct experiments with real-world graphs to validate the effectiveness of the proposed method.

\subsubsection{Setup}
For the real-world evaluation, we use a subset of the GSP-Traffic dataset and the $3$ Chilean power-grids with non-geometric edge weight introduced in Section \ref{sec: toy_example}.
We perform the experiment on the $6$ graphs with the largest numbers of nodes among the 195 connected graphs in the traffic dataset, as these constitute some of the most challenging cases for compression due to their larger numbers of edges and the resulting larger line graphs.
Some of the graphs are shown in Fig.~\ref{fig: ex-graph_real} and Fig.~\ref{fig: ex-graph_powergrid}.
For traffic graphs, the edge weights are determined from geometric distance according to \eqref{eqn: distance}, whereas the edge weights in the power-grid are the transmission-line voltage values described in Section \ref{sec: toy_example}.

The quantization steps were set to the same values as in the synthetic data experiments (see Section \ref{sec: synthetic_experiments}), and we quantify performance in the same way (see Section \ref{sec: synthetic_experiments}).

In addition to evaluating the reconstruction accuracy of edge weights, we also assess the impact of compression on a downstream task in the traffic dataset.
Specifically, we consider a graph signal denoising task that uses the total number of vehicles passing through each intersection over a 500-second interval, as included in the GSP-Traffic dataset.
Each network contains $T = 100$ scenarios, and we denote the graph signal matrix as $\Xm \in \mathbb{R}^{N \times T}$.

For a given graph, the denoising performance is measured by $\mathrm{SNR} = 10 \log_{10} \|\Xm\|_F^2 / \|\Xm - \hat{\Xm}\|_F^2$, where $\hat{\Xm} = (\Id_N + 10\Lm)^{-1}\Xm_n$ is the denoised signal and $\Xm_n$ is the noisy graph signal corrupted by additive Gaussian noise drawn from $\Nc(0,\sigma_n^2)$.
In this experiment, we set the noise standard deviation to $\sigma_n \in \{0.15, 0.3\}$.

Let $\mathrm{SNR}_{\mathrm{orig}}$ and $\mathrm{SNR}_{\mathrm{comp}}$ denote the SNR obtained by using the original weighted adjacency matrix and the reconstructed weighted adjacency matrix after compression, respectively.
We then define the difference in SNR (DSNR) as
\begin{equation}
\mathrm{DSNR} = |\mathrm{SNR}_{\mathrm{orig}} - \mathrm{SNR}_{\mathrm{comp}}|.
\end{equation}
Thus, a smaller DSNR indicates that compression results in less degradation of downstream denoising performance.

\begin{figure}[tp]
\centering
\subfloat[Australia-Melbourne]{\includegraphics[width=0.5\linewidth]{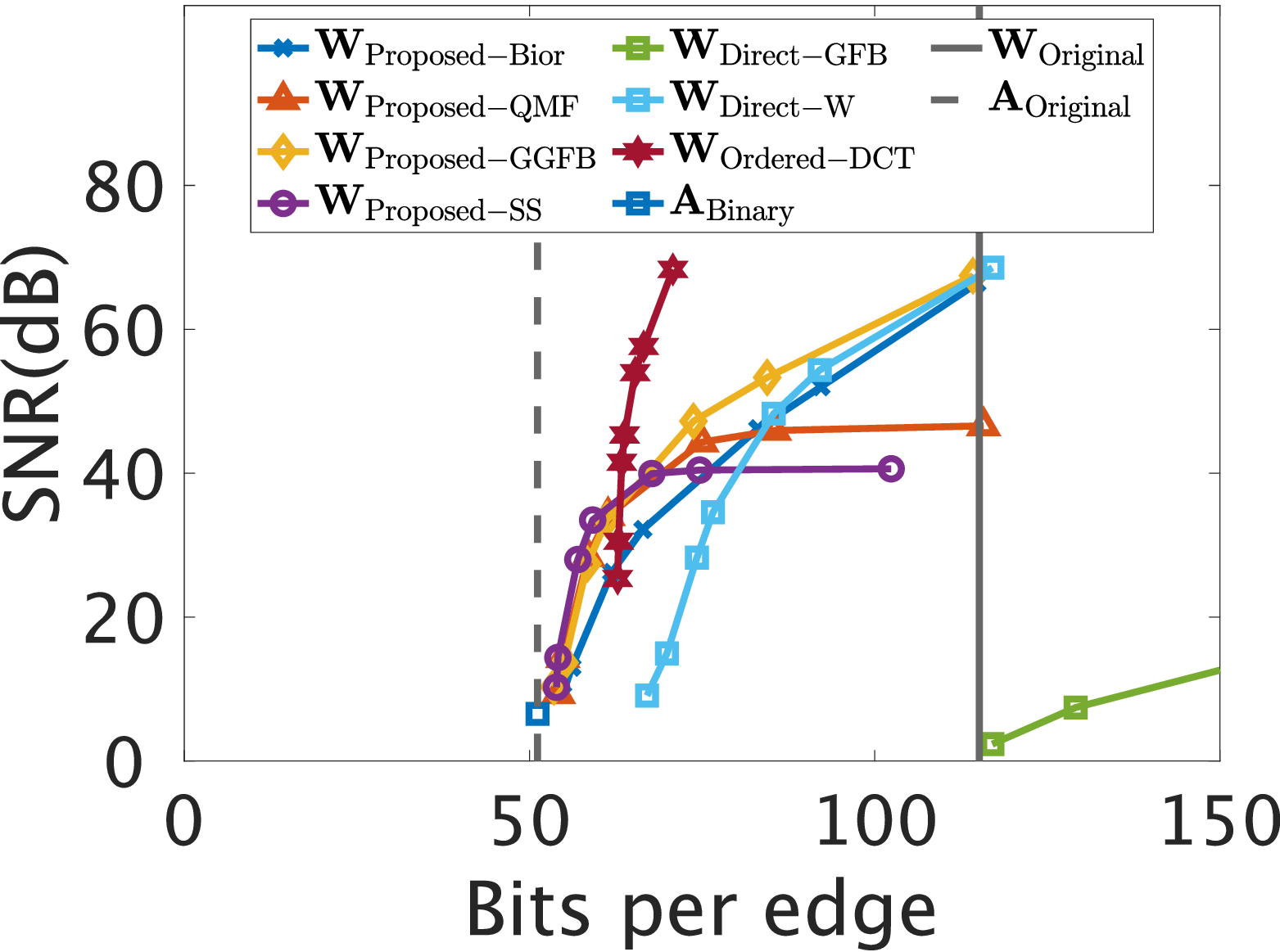}}
\subfloat[Brazil-Sao Paulo]{\includegraphics[width=0.5\linewidth]{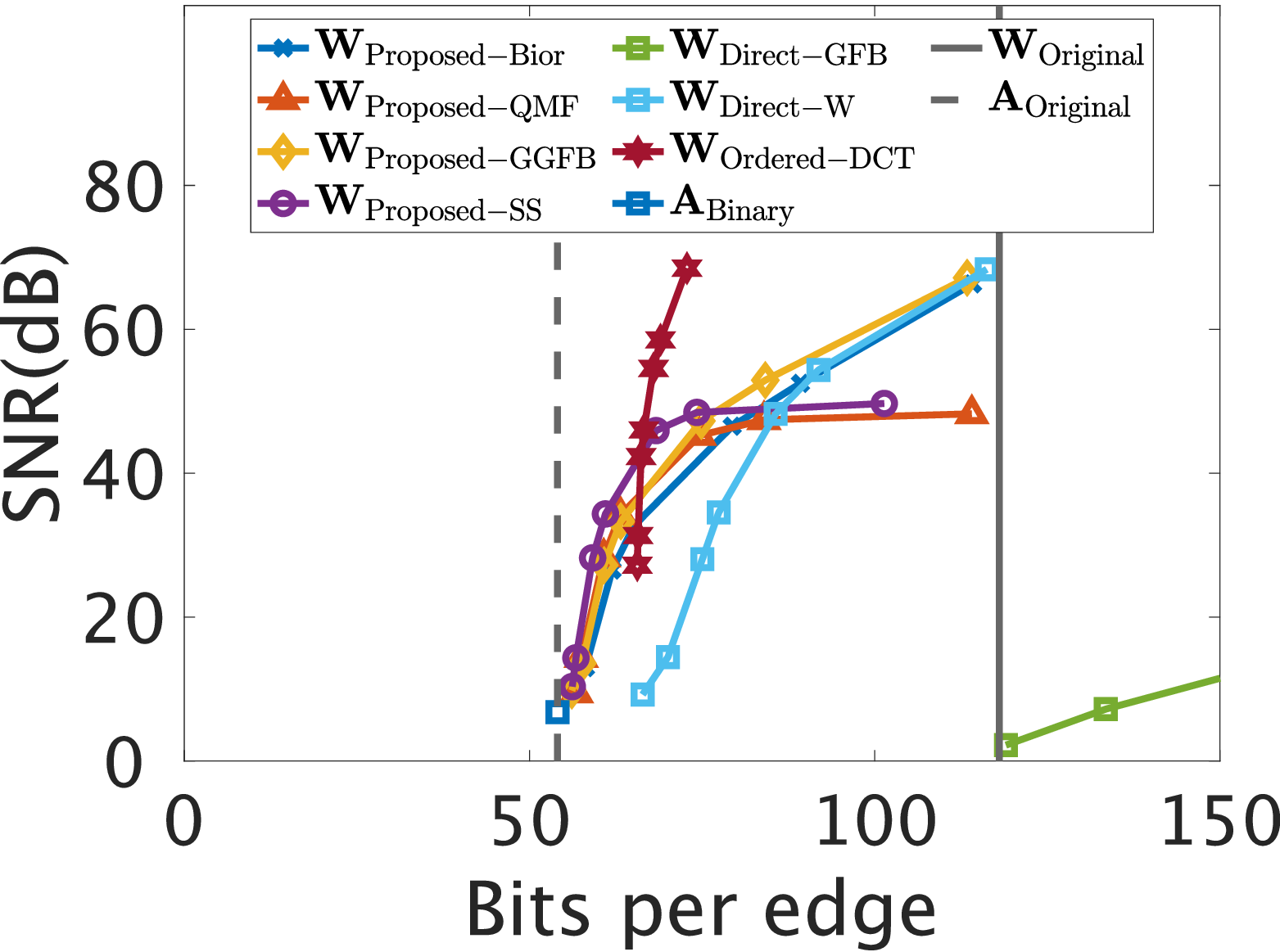}}\\
\subfloat[China-Shanghai]{\includegraphics[width=0.5\linewidth]{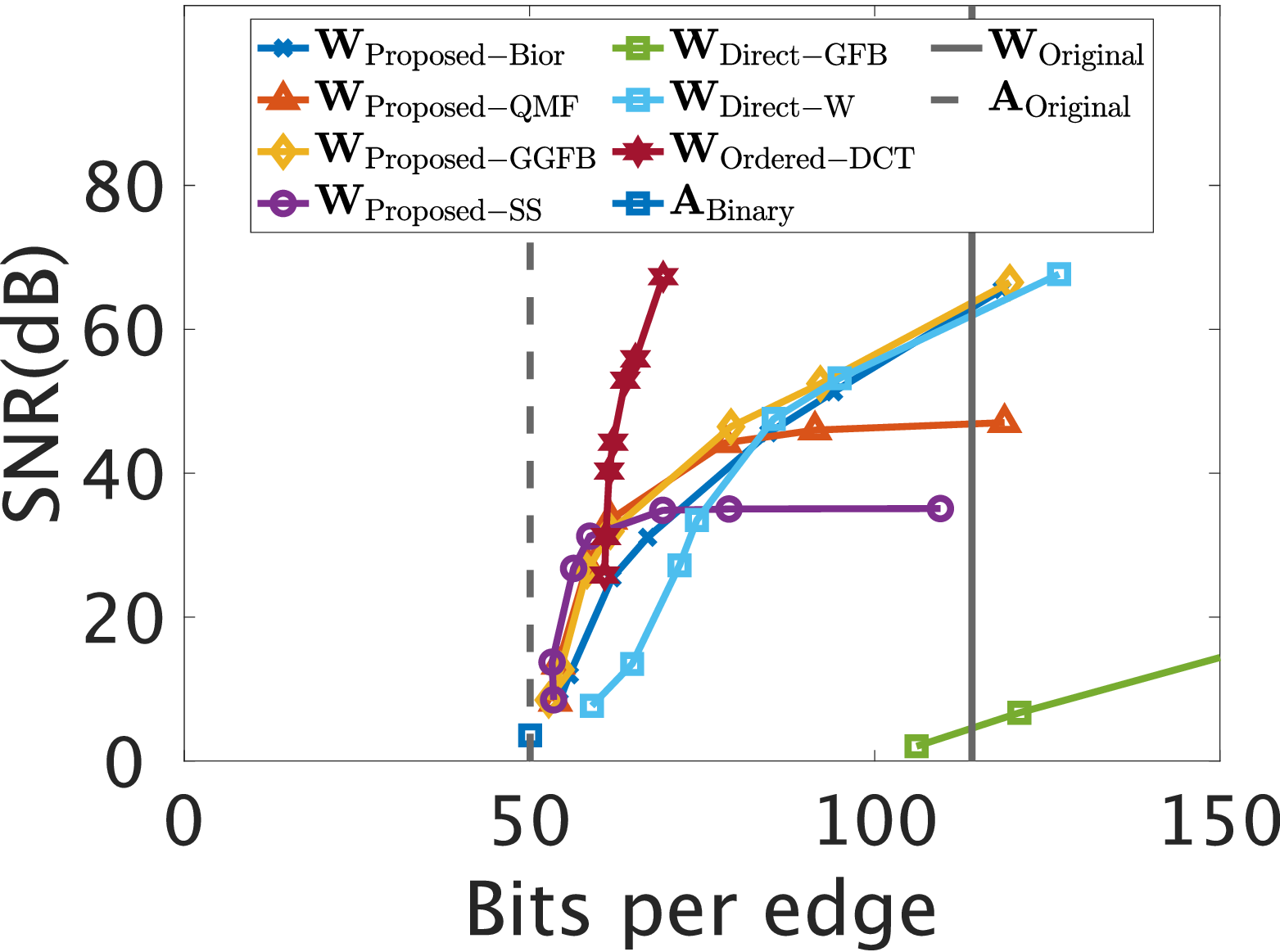}}
\subfloat[Germany-Cologne]{\includegraphics[width=0.5\linewidth]{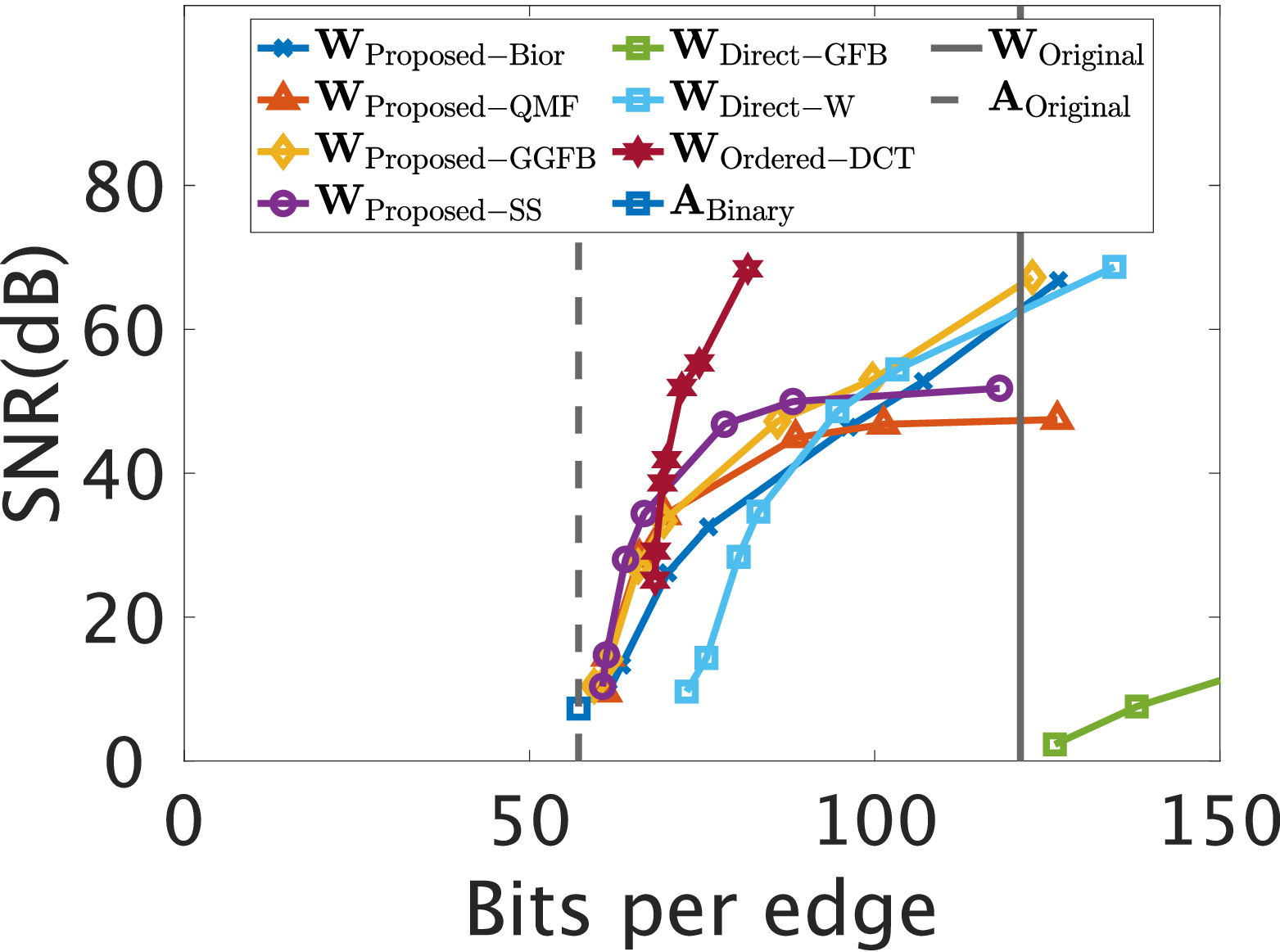}}\\
\subfloat[UAE-Dubai]{\includegraphics[width=0.5\linewidth]{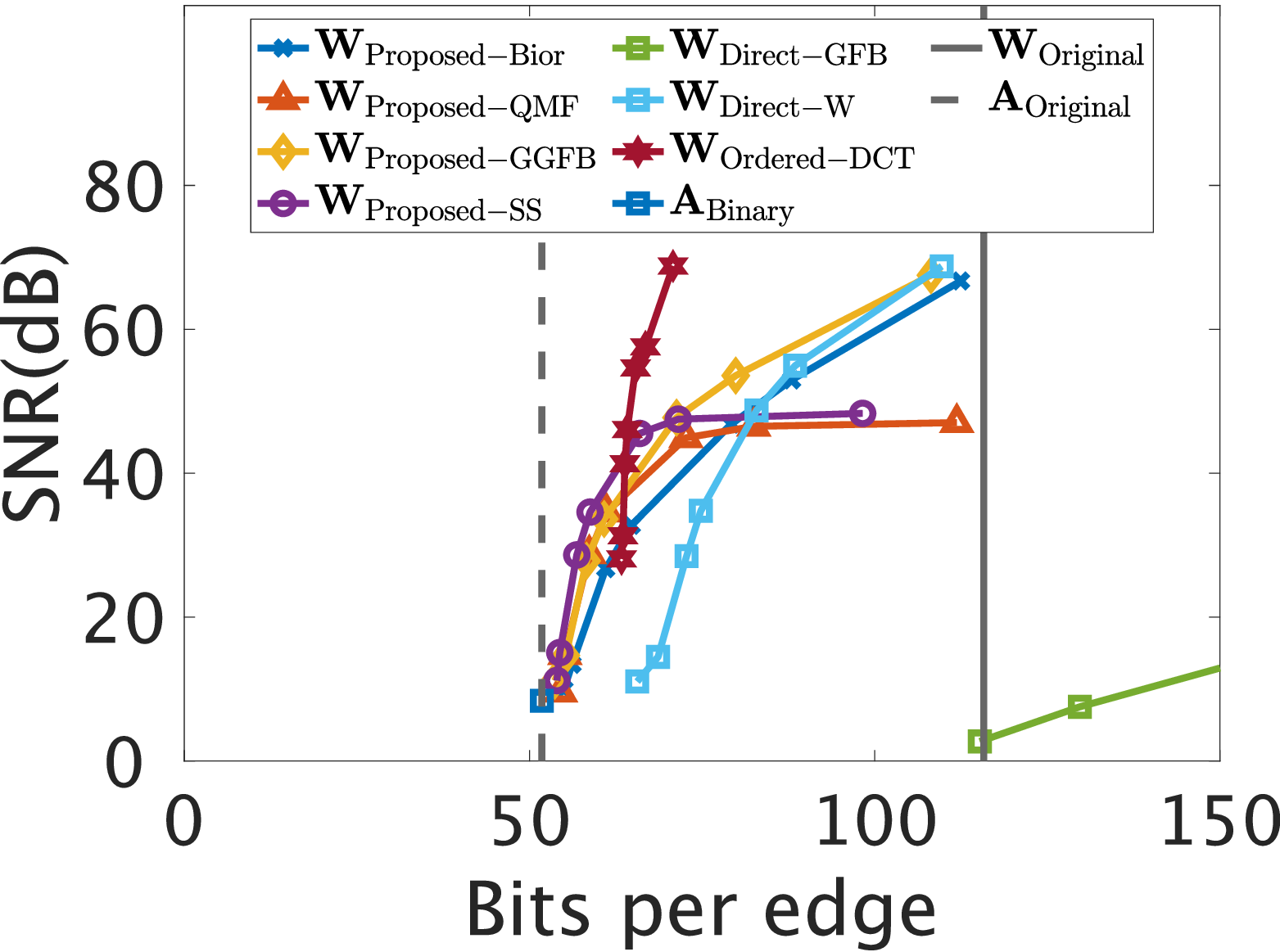}}
\subfloat[USA-Chicago]{\includegraphics[width=0.5\linewidth]{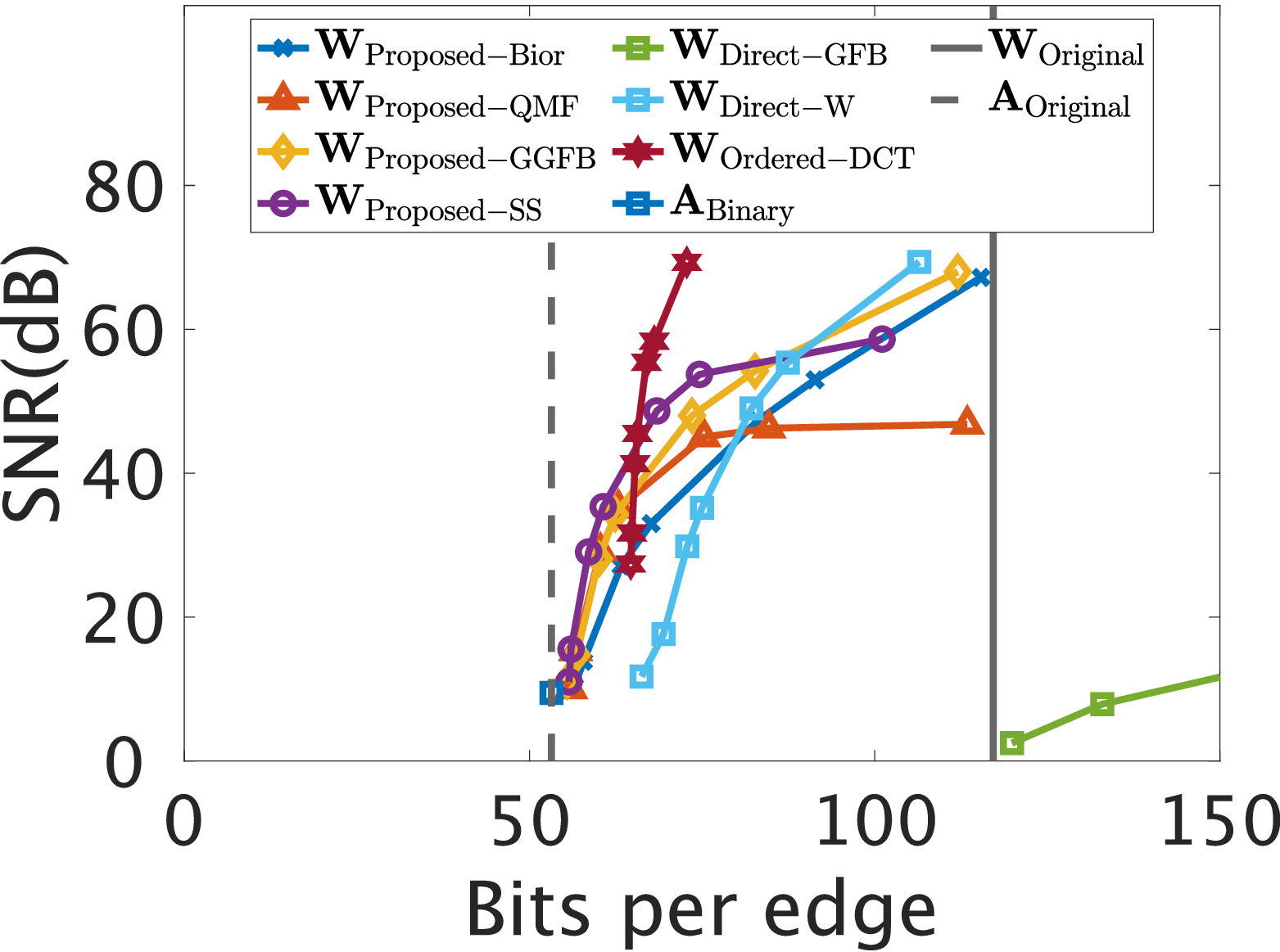}}
\caption{Performance comparison of SNRs of the reconstruction error in dB. The horizontal axis shows the total transmitted bits per edge. The black solid and dotted lines indicate the bits per edge when the weighted and binary adjacency matrices are losslessly compressed, respectively.}
\label{fig: exp_real}
\end{figure}

\begin{figure}[tp]
    \centering
    \subfloat[Power-grid (WithoutTap)]{\includegraphics[width=0.5\linewidth]{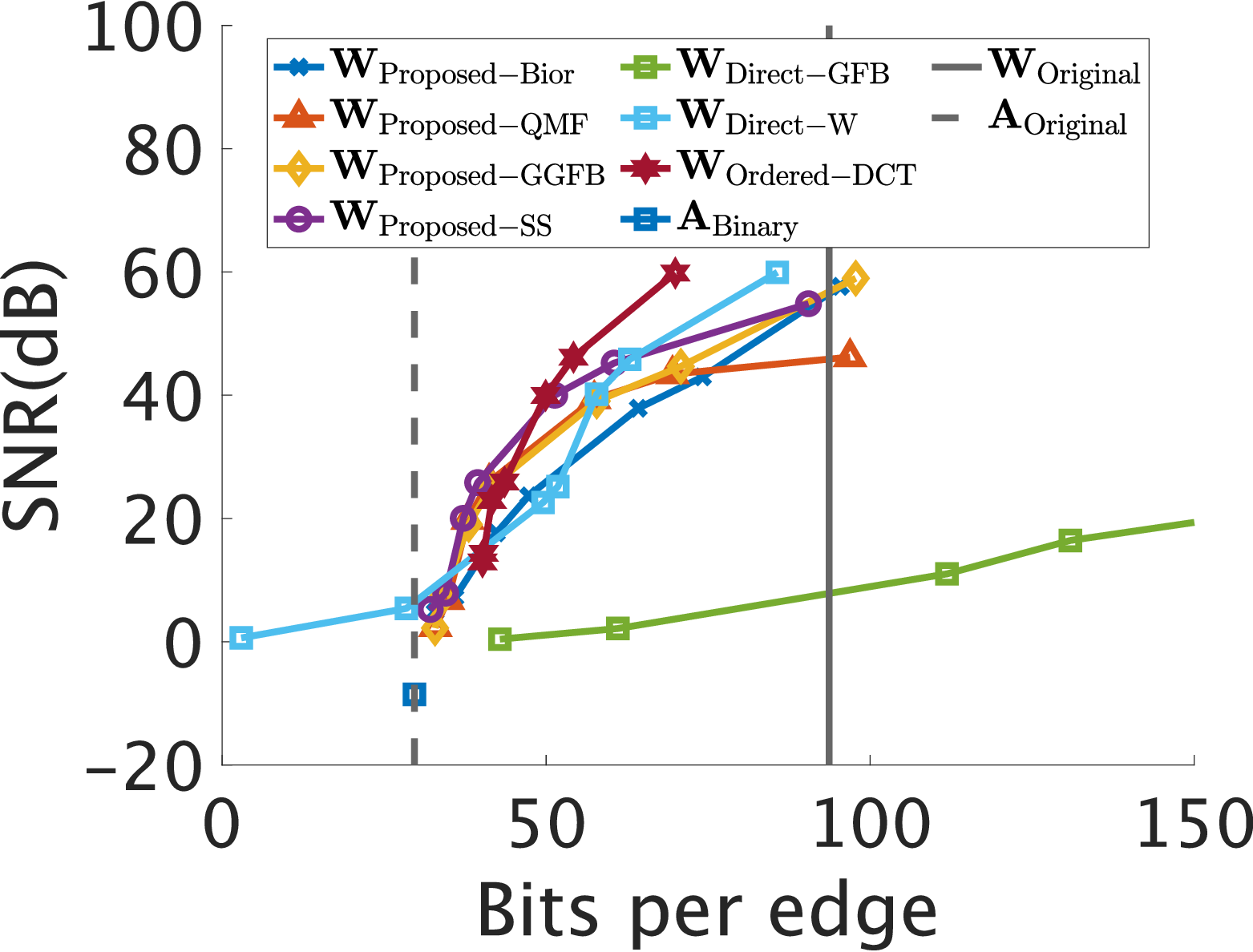}}
    \subfloat[Power-grid (WithTap)]{\includegraphics[width=0.5\linewidth]{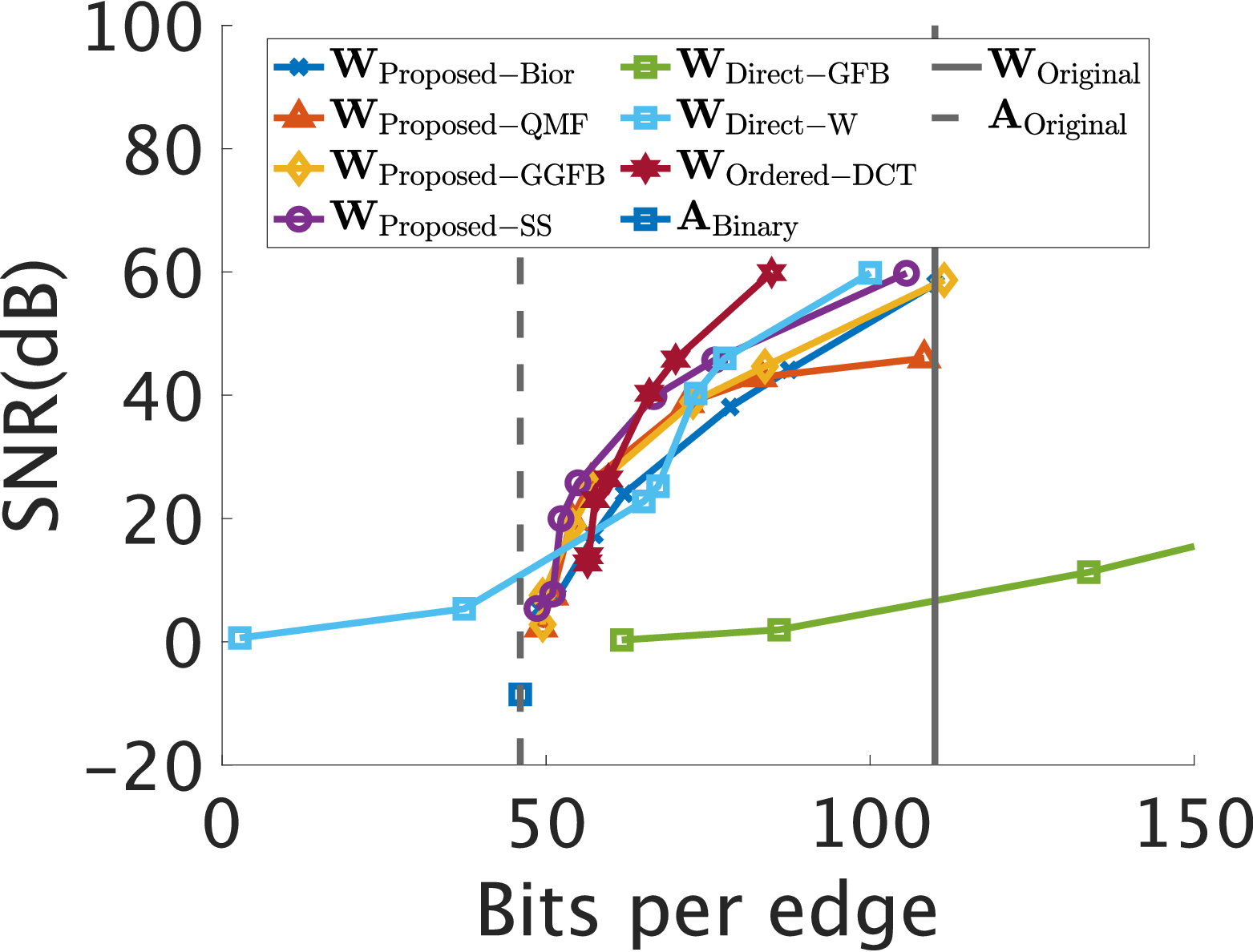}}\\
    \subfloat[Power-grid (Reduced)]{\includegraphics[width=0.5\linewidth]{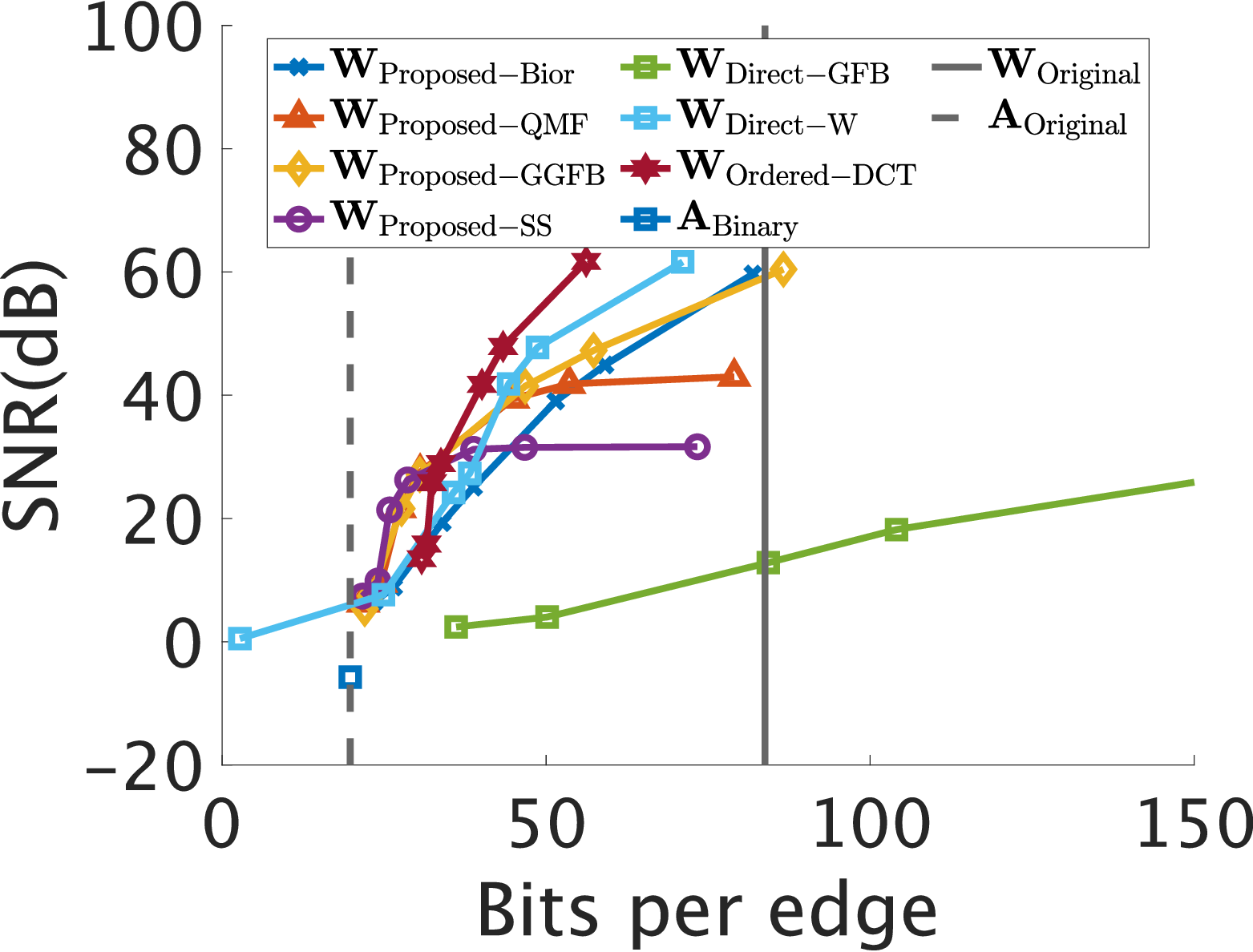}}
    \caption{Performance comparison of SNRs of the reconstruction error in dB. The horizontal axis shows the total transmitted bits per edge. The black solid and dotted lines indicate the bits per edge when the weighted and binary adjacency matrices are losslessly compressed, respectively. The edge weights are transmission-line voltage values rather than geometry-derived distances.}
    \label{fig: exp_powergrid}
\end{figure}

\begin{figure}[tp]
    \centering
    \subfloat[$\sigma_n = 0.15$]{\includegraphics[width=0.5\linewidth]{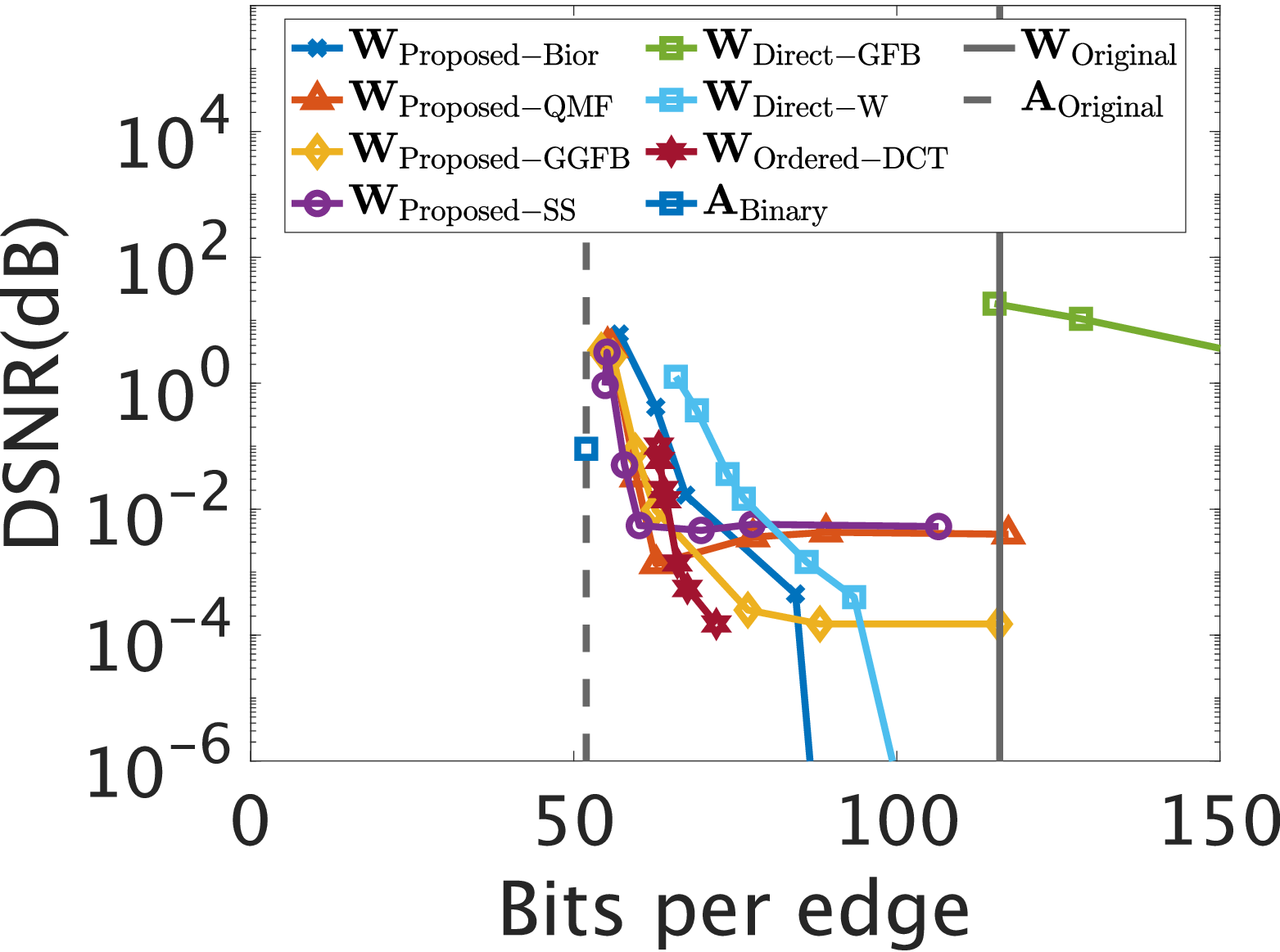}}
    \subfloat[$\sigma_n = 0.3$]{\includegraphics[width=0.5\linewidth]{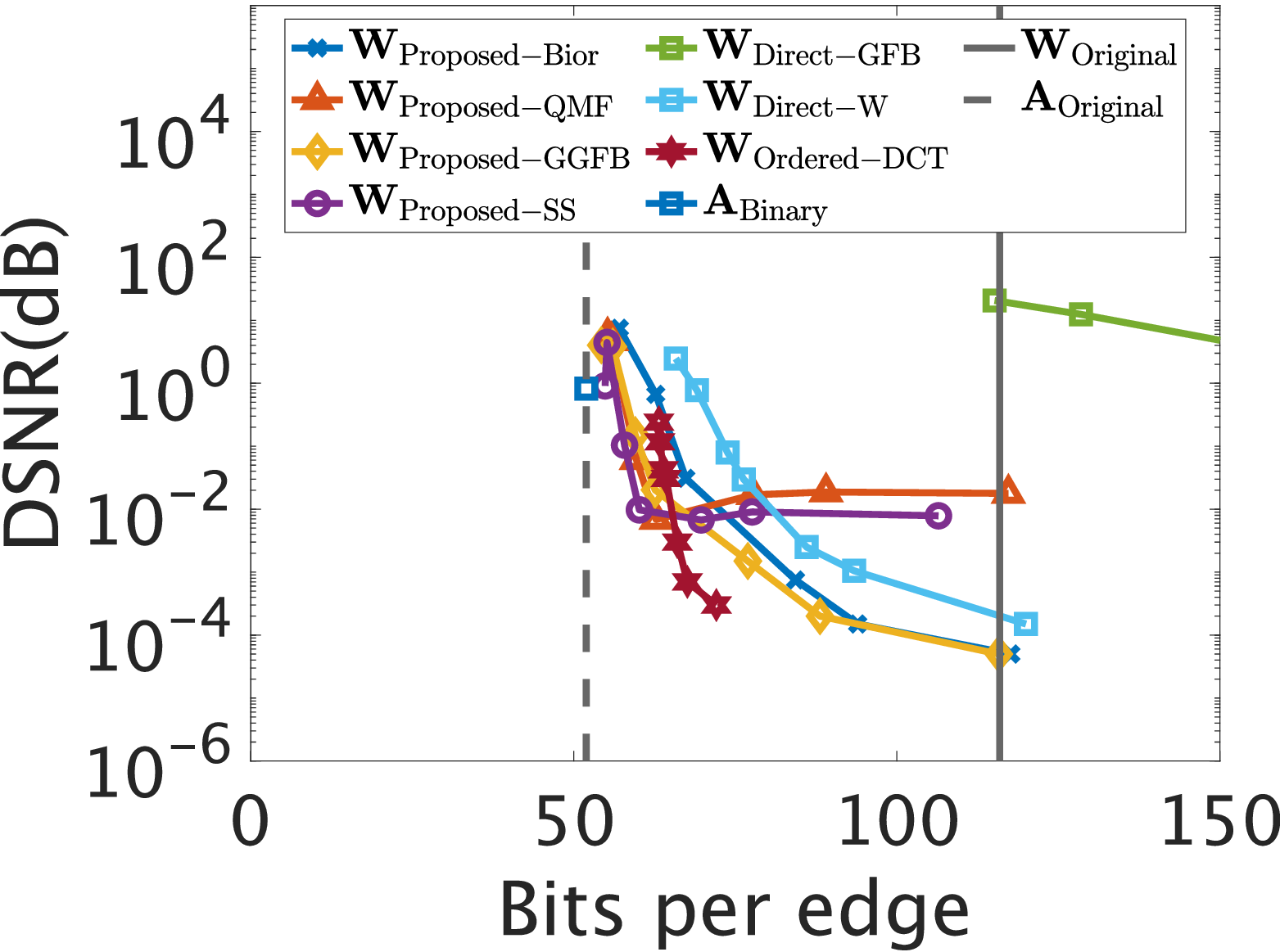}}
    \caption{The performance comparison of DSNRs of the graph signal denoising task. The horizontal axis shows the total transmitted bits per edge. The black solid and dotted lines indicate the bits per edge when the weighted and binary adjacency matrices are losslessly compressed, respectively.}
    \label{fig: exp-real-signal}
\end{figure}

\subsubsection{Experimental Results}
Fig.~\ref{fig: exp_real} shows the results of the traffic graph compression.
Overall, the results are similar to those in the synthetic case: the proposed method achieves higher reconstruction accuracy than the alternative methods across graph structures.

Fig.~\ref{fig: exp_powergrid} shows the compression results for the three power-grid graphs.
Across the three graphs, the proposed methods achieve better reconstruction accuracy than the alternative methods.
These results indicate that the advantage of the proposed framework is not limited to geometrically defined edge weights; the inter-edge relationships in the line graph also benefit non-geometric edge weights.

Fig.~\ref{fig: exp-real-signal} shows the DSNR results for the graph signal denoising task.
When the quantization step is small, all methods maintain denoising performance close to the uncompressed case, yielding DSNR values near zero.
However, as the quantization step increases, some methods exhibit a notable degradation in DSNR.

This degradation occurs because coarse quantization can cause some reconstructed edge weights to become zero.
Even though the binary topology $\Am$ is preserved by lossless compression, coarsely quantizing the weighted adjacency matrix $\Wm$ or the edge-weight vector $\wv$ can collapse nonzero weights to zero.
Consequently, the corresponding edges are removed during reconstruction, resulting in a substantial loss of topological information.

This phenomenon leads to significant performance degradation in the downstream denoising task compared to the uncompressed case.
The proposed methods demonstrate more robust performance against this issue, as graph filter bank-based transform coding on the line graph provides a more compact representation of edge weights, enabling finer quantization at the same bit rate.

\section{Conclusion}\label{sec: conclusion}
This paper proposes a lossy compression framework for weighted adjacency matrices using graph filter banks.
In our method, the binary adjacency information and edge weights of the graph are compressed losslessly and lossily, respectively.
The original graph is transformed into a line graph for lossy compression of edge weights.
Edge weights are treated as a graph signal on the line graph and compressed using graph filter banks.
We also introduce edge smoothness and show that it serves as a measure of the difficulty of compression.
Experimental results demonstrate the effectiveness of the proposed method compared with several transform coding approaches.

\printbibliography
\vfill\pagebreak

\end{document}